\newtheorem{observation}{Observation}
\newcommand{\indeg}{{\mbox{\it indeg}}}
\newcommand{\outdeg}{{\mbox{\it outdeg}}}
\title{Straightening out planar poly-line drawings
}
\author{Therese Biedl\thanks{Supported by NSERC and the Ross and Muriel
Cheriton Fellowship.}}
\institute{David R.~Cheriton School of Computer Science, 
University of Waterloo,  \\
Waterloo, ON N2L 3G1, Canada, {\tt biedl@uwaterloo.ca}}
\date{\today}
\begin{document}
\maketitle

\begin{abstract}
This paper addresses
the following question:  Given a planar poly-line drawing of a graph, can we
``straighten it out'', i.e., convert it to a planar straight-line drawing,
while keeping some features unchanged?

We show that any $y$-monotone poly-line drawing can be straightened out
while maintaining 
$y$-coordinates and height.  The width may increase much, but we also
show that on some graphs
exponential width is required if we do not want to increase
the height.  
Likewise $y$-monotonicity is required: there are poly-line
drawings (not $y$-monotone)
that cannot be straightened out while maintaining the height.
We give some applications of our result.

\keywords{Graph drawing; poly-line drawing; straight-line drawing}
\end{abstract}

\section{Introduction}
\label{se:intro}

Let $G=(V,E)$ be a simple graph with $n=|V|$ vertices and
$m=|E|$ edges.  We assume that $G$ is {\em planar},
i.e., it can be drawn without crossings.
Almost by definition $G$ has a planar {\em poly-line drawing},
i.e., a drawing where vertices are points, edges are drawn
as sequences of contiguous line segments and there are no
crossings.  For if any part of a planar drawing of $G$ uses
curved lines, then these can be approximated by polygonal curves.

However, for ease of readability and storing the drawing,
one prefers a {\em straight-line drawing}, i.e., a drawing
where vertices are points and edges are straight-line segments
between their endpoints.
It was known for a long time that any planar graph has a 
straight-line drawing \cite{Fary48,Stein51,Wagner36},
even in an $O(n)\times O(n)$-grid \cite{FPP90,Sch90}.
Many improvements have been developed since, see for example
\cite{DBETT98,NR04}.

This paper addresses the following question:  Given a planar
poly-line drawing $\Gamma$ of a graph $G$, can we ``straighten it out''?
By the above, there exists a straight-line drawing of $G$,
but can we create a straight-line drawing $\Gamma'$ of $G$ that
is similar to $\Gamma$ in some sense?  For example, for some
applications (we list a few in Section~\ref{se:appl}) it is
much easier to create a poly-line drawing; can it be
converted to a straight-line drawing without losing
some key features?

The answer to our main question obviously depends on what is
meant by ``similar''.  One possible way to define this would
be to request that the relative order of vertices stays
the same, i.e., $v$ is to the left / below $w$ in $\Gamma$
if and only if it is in $\Gamma'$.  However, one quickly sees
that then not all poly-line drawings can be straightened out,
see the drawing of $K_4$ in Figure~\ref{fig:K4}(left).

The main result (Theorem~\ref{thm:main}) of this paper 
is that  if a poly-line drawing
is $y$-monotone, then it can be converted into a straight-line
drawing such that $y$-coordinates of vertices
are unchanged, and in particular the height of the drawing
is unchanged.  This result has some applications that we list.
Most importantly,
it allows to derive some height-bounds for drawing styles for
which we are not aware of any direct proof, and it allows to
formulate some NP-hard graph drawing problems as integer programs.

We also give graphs
where two limitations of Theorem~\ref{thm:main} cannot be overcome:  It
is not possible to remove the ``$y$-monotone'' condition without
increasing the height, and it is not possible to 
keep the width to less than exponential without increasing the height.

\medskip\noindent{\bf Historical note:}
At WAOA'12 \cite{Bie-WAOA12}, we claimed 
a result (about straightening out so-called 
flat visibility representations) that would easily imply 
Theorem~\ref{thm:main}.
Unfortunately the algorithm in \cite{Bie-WAOA12}
is incorrect since the resulting drawing may not be planar; 
we review the algorithm and show the error in the appendix.

\section{Preliminaries}


Throughout this paper $G$ denotes a planar graph, $\Gamma$
denotes a planar poly-line drawing of $G$, and $\Gamma'$
denotes the straight-line drawing of $G$ that we try to construct.
Recall that in $\Gamma$ edges are drawn as sequences
of contiguous line segments; a point of transition between the
line segments is called a {\em bend}.  $\Gamma$ is called
{\em $y$-monotone} if for any edge, the $y$-coordinates 
monotonically increases as we walk from one end to the other.
Horizontal segments are allowed.

We call a drawing a {\em grid-drawing} if vertices and bends
have integer coordinates.  A grid drawing is said to have
{\em width $w$} and {\em height $h$} if (possibly after translation)
vertices and bends are placed on the $[1,w]\times [1,h]$-grid.  
The height is thus measured by the number of {\em rows}, 
i.e., horizontal lines with integer $y$-coordinates that are occupied by the 
drawing.  

Our transformation will be such that $\Gamma$ and $\Gamma'$ 
{\em have the same $y$-coordinates}, i.e., any vertex has 
the same $y$-coordinate in both drawings.  
Since we give transformations only
for $y$-monotone drawings, this implies that
an edge $e$ intersects row $r$ in $\Gamma$ if and only if
it intersects row $r$ in $\Gamma'$.

Our transformation actually achieves a stronger property:
$\Gamma$ and $\Gamma'$
have {\em the same left-to-right order of edges and vertices in each row}.  
This means that if we enumerate in row $r$ from left to right the 
segments $s_1,s_2,s_3,\dots$ that are occupied by $\Gamma$ (i.e., each
$s_i$ is either a point containing a vertex, or a bend of an edge,
or a point where an edge segment crosses row $r$, or a line segment 
where an edge is routed along $r$),
and if we similarly enumerate the segments $s_1',s_2',s_3',\dots$ occupied
by $\Gamma'$ in row $r$, then for all $i$ segments $s_i$ and $s_i'$ belong
to the same vertex or the same edge.

\begin{figure}[t]
\hspace*{\fill}
\input{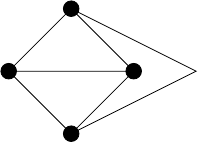_t}
\hspace*{\fill}
\input{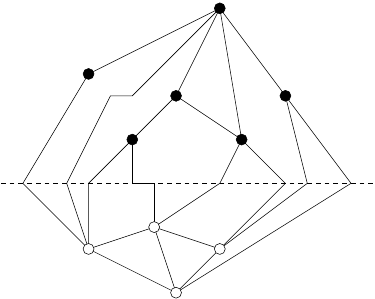_t}
\hspace*{\fill}
\input{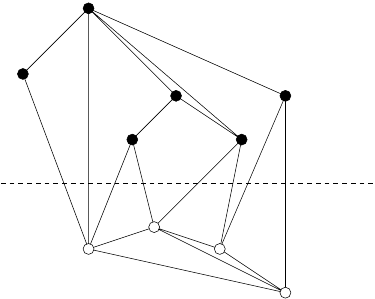_t}
\hspace*{\fill}
\caption{$y$-monotone poly-line drawings. (Left) $K_4$ cannot
be straightened out while maintaining relative $x$-
and $y$-coordinates.  (Middle) An HH-drawing (see 
Section~\ref{se:appl}).  (Right) The HH-drawing straightened out
while maintaining
$y$-coordinates and left-to-right-orders in each row.}
\label{fig:K4}
\label{fig:HHdrawing}
\label{fi:HHdrawing}
\end{figure}

\begin{lemma}
\label{lem:planar}
Let $\Gamma, \Gamma'$ be $y$-monotone poly-line drawings of the same
graph $G$.   If $\Gamma$ is planar, and $\Gamma'$ has the same $y$-coordinates 
and the same left-to-right-orders in each row as $\Gamma$, then $\Gamma'$
is also planar, with the same planar embedding and outer-face.
\end{lemma}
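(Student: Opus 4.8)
The plan is to show that three things --- planarity, the rotation system, and the outer face --- are, for a $y$-monotone poly-line drawing, completely determined by exactly the combinatorial data that $\Gamma$ and $\Gamma'$ share: the graph $G$, the $y$-coordinate of every vertex, and, for each row, the left-to-right sequence of objects (vertices and pieces of edges) met in that row. Throughout I use that every bend lies on a row --- this holds in particular for grid drawings, and in general one may refine by inserting rows --- so that each edge, where it meets an open horizontal strip between two consecutive rows, is a single straight segment, while on each row each edge occupies a point or a horizontal segment.

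\textbf{Planarity of $\Gamma'$.} Suppose $\Gamma'$ is not planar. Then $\Gamma'$ has two edges that cross or overlap, or an edge through a non-incident vertex, or two distinct vertices at the same point; in every case two distinct objects of $\Gamma'$ share a point that is not a common endpoint, and that point lies either on some row or strictly between two consecutive rows. If it lies on row $r$, two distinct objects overlap in row $r$ of $\Gamma'$; since the left-to-right sequence in row $r$ is the same in $\Gamma'$ as in $\Gamma$, the same two objects overlap in row $r$ of $\Gamma$, contradicting planarity of $\Gamma$. If it lies in the open strip between rows $r$ and $r+1$, two edge-segments cross there. But two straight segments, one from $(a,r)$ to $(a',r+1)$ and one from $(b,r)$ to $(b',r+1)$, meet in the open strip if and only if $a-b$ and $a'-b'$ have strictly opposite signs --- that is, if and only if the strict left-to-right order of the segments' lower endpoints in row $r$ is reversed for their upper endpoints in row $r+1$. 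Whether such a reversal occurs for a given pair of edges depends only on the left-to-right sequences in rows $r$ and $r+1$, which are identical for $\Gamma$ and $\Gamma'$; hence a crossing in $\Gamma'$ again forces a crossing in $\Gamma$, a contradiction. So $\Gamma'$ is planar.

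\textbf{Embedding and outer face.} With $\Gamma'$ planar, it remains to match rotation systems and outer faces. Around a vertex $v$ in row $r$, in counterclockwise order the incident edges appear as: the edges leaving $v$ upward, listed in the reverse of the order in which they meet row $r+1$; then the edge leaving $v$ horizontally to the left, if any; then the edges leaving $v$ downward, listed in the order in which they meet row $r-1$; then the edge leaving $v$ horizontally to the right, if any. (One uses here that two $y$-monotone segments issuing from $v$ into the same strip keep their left-to-right order until the next row, and that, since edges do not overlap, $v$ has at most one horizontal edge on each side.) Every ingredient of this list is read off from the sequences in rows $r-1$, $r$, $r+1$, so $\Gamma$ and $\Gamma'$ agree at every vertex and hence have the same set of faces. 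The outer face is the unbounded one, and it too is read off from the data: let $v^*$ be the highest vertex and, among those, the rightmost; nothing of the drawing lies above $v^*$ or to its right within its row, so the outer face is the one meeting the corner at $v^*$ that opens straight upward, a description using only the shared data. (If $G$ is disconnected, which face of one component contains another is recovered from the row sequences in the same manner.) Thus $\Gamma$ and $\Gamma'$ have the same planar embedding and the same outer face. Equivalently, all of this can be packaged as a homeomorphism of the plane carrying $\Gamma$ to $\Gamma'$, obtained by mapping each row to itself by an increasing homeomorphism that matches objects in order and then extending across each strip, using that pairwise non-crossing segments with prescribed matched endpoints on the two sides of a strip are unique up to homeomorphism.

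\textbf{Where the work is.} The load-bearing part of a full proof is the bookkeeping around degenerate configurations: horizontal edge-pieces sharing a vertex in a row, several edges leaving one vertex into the same strip, and segments meeting only at an endpoint on a bounding row. In particular one must fix the precise meaning of ``same left-to-right order'' so that it already forbids, in either drawing, overlapping edge-pieces, an edge through a non-incident vertex, and coincident vertices, yet is still implied by the hypothesis; once that is done, the ``crossing equals order-reversal'' equivalence and the rotation description above hold as stated, and the argument is exactly the one above.
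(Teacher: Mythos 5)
Your planarity argument is exactly the paper's: a crossing in $\Gamma'$ either lies on a row, where two distinct objects would coincide and violate the shared left-to-right order, or in an open strip between consecutive rows, where it forces an order reversal between those two rows that would then equally have to occur in $\Gamma$. The paper records only this as a one-sentence sketch and leaves the embedding and outer-face claims implicit, so your rotation-system and outer-face bookkeeping (and your explicit caveat that bends must lie on rows for the strip argument to apply) is more complete, but the approach is the same.
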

\begin{proof}(Sketch) If edges $e$ and $e'$ crossed in $\Gamma'$,
but not in $\Gamma$, then in the row above or below the point of
the crossing the left-to-right orders are not the same in 
$\Gamma$ and $\Gamma'$.\qed
\end{proof}

\section{Creating straight-line drawings}
\label{se:PL_SL}

In this section, we show how to convert any planar $y$-monotone
poly-line drawing into a planar straight-line drawing.

\subsection{Straightening out triangulated drawings}

We first show such a transformation for a {\em triangulated graph}, i.e.,
for a graph where all faces including the outer-face are triangles.  
It will be helpful to direct an edge $(v,w)$ as $v\rightarrow w$ if $y(v)<v(w)$,
and replace horizontal edges by two directed edges in opposite directions.
Recall that an {\em inner vertex} is a vertex that is not on the outer-face.


\begin{observation}
Let $\Gamma$ be a planar $y$-monotone poly-line drawing of a triangulated
graph $G$.  Then
$\indeg(v)\geq 1$ and $\outdeg(v)\geq 1$ for every inner vertex.
\end{observation}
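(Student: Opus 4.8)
The plan is to argue by contradiction, proving the statement for $\indeg$; the claim for $\outdeg$ will then follow by the obvious up–down symmetry. So suppose some inner vertex $v$ has $\indeg(v)=0$. By the way the edges are oriented, this means that $v$ has no incident horizontal edge (such an edge would contribute an in-edge at $v$) and no incident edge to a lower neighbour; hence every neighbour $u$ of $v$ satisfies $y(u)>y(v)$. Moreover, since each edge $vu$ is drawn $y$-monotonically and ends strictly above $v$, the segment of $vu$ incident to $v$ has a non-negative vertical component. Thus \emph{every} edge segment of $\Gamma$ emanating from $v$ points into the closed upper half-plane $\{y\ge y(v)\}$.

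Next I would locate a ``downward'' face at $v$. Since every edge segment at $v$ points (weakly) upward, the straight-down direction at $v$ lies strictly inside one of the angular wedges into which these segments cut a small disc around $v$; let $f$ be the face of $\Gamma$ occupying that wedge. Then a short vertical segment pointing straight down from $v$ lies in the interior of $f$, so $f$ contains points with $y$-coordinate strictly less than $y(v)$. Because $v$ is an inner vertex, $f$ is an inner face, hence bounded; and because $G$ is triangulated, $f$ is a triangle $v w_1 w_2$ with $vw_1$, $vw_2$, $w_1 w_2$ on its boundary. Here $w_1$ and $w_2$ are neighbours of $v$, so $y(w_1),y(w_2)>y(v)$.

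The contradiction would then come from $y$-monotonicity of the three boundary edges of $f$: edge $vw_1$ stays within the $y$-range $[y(v),y(w_1)]$, edge $vw_2$ within $[y(v),y(w_2)]$, and edge $w_1 w_2$ within the interval between $y(w_1)$ and $y(w_2)$, which lies strictly above $y(v)$. Hence every point of the boundary of $f$ has $y$-coordinate at least $y(v)$. On the other hand $\overline{f}$ is compact and $f$ is open, so the $y$-coordinate attains its minimum over $\overline f$ on the boundary of $f$ (an affine function strictly decreasing in the downward direction has no minimum at an interior point of an open set), and that minimum therefore equals $y(v)$ --- contradicting the fact that $f$ contains a point below $v$. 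This shows $\indeg(v)\ge 1$. Applying the same argument to the drawing obtained by negating all $y$-coordinates --- which is again a planar $y$-monotone poly-line drawing of $G$ and which exchanges in- and out-degrees at every vertex --- yields $\outdeg(v)\ge 1$.

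The only genuinely delicate point is the topological step: identifying the face $f$ that contains the downward direction at $v$ and pinning it down as the triangle $v w_1 w_2$ whose three sides we fully control. Once that is in place, the remainder is just the elementary fact that a bounded region cannot reach below the lowest point of its boundary, together with routine bookkeeping about edge orientations and $y$-monotonicity.
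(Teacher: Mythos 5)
Your argument is correct and is essentially the paper's own (one-line) proof, fully fleshed out: both identify the face lying in the downward direction from $v$ and use $y$-monotonicity of its boundary edges to show that a triangular face with all of $v$'s neighbours above $v$ cannot contain points below $v$. The extra care you take with the topological step (locating the wedge containing the downward direction, and the boundedness of the inner face) is exactly what the paper's sketch leaves implicit.
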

\begin{proof}(Sketch) If $\indeg(v)=0$ for an inner vertex $v$,
then the face ``below'' $v$ would have degree $\geq 4$ since edges
are drawn $y$-monotonically. \qed
\end{proof}

The conversion into a straight-line drawing will happen by repeatedly
contracting one vertex that has properties that allow to re-insert it later.
The existence of such a vertex is proved in the following lemma.

\begin{lemma}
\label{lem:cases}
Let $\Gamma$ be a planar $y$-monotone poly-line drawing of a triangulated
graph $G$.  Then there exists an inner  vertex $v$ such that
\begin{itemize}
\item $v$ is incident to a horizontal edge, or
\item $\indeg(v)=1$, and $\indeg(w)\geq 2$ for all vertices $w$ with 
	$v\rightarrow w$, or
\item $\outdeg(v)=1$, and $\outdeg(w)\geq 2$ for all vertices $w$ with 
	$w\rightarrow v$.
\end{itemize}
\end{lemma}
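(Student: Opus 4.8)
The plan is to argue by contradiction: assume that no inner vertex satisfies any of the three conditions, and derive a contradiction by tracing a path through the graph that must, in a finite triangulated drawing, eventually terminate or close up in a way that is impossible. First I would dispose of the case where some inner vertex is incident to a horizontal edge — if so, the first bullet holds and we are done — so from now on assume no inner vertex touches a horizontal edge. Now consider the set $S_{\mathrm{in}}$ of inner vertices $v$ with $\indeg(v)=1$. This set is nonempty: by the Observation every inner vertex has $\indeg\ge 1$, and a vertex of minimum $y$-coordinate among the inner vertices can have only one in-edge (any two in-edges would, by $y$-monotonicity and triangulation, force a lower inner vertex), so in fact one should locate a concrete such vertex — e.g. an inner vertex whose unique lowest neighbor is as high as possible, or simply the inner vertex of smallest $y$-coordinate. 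Since the second bullet fails for every vertex in $S_{\mathrm{in}}$, for each $v\in S_{\mathrm{in}}$ there is some $w$ with $v\rightarrow w$ and $\indeg(w)=1$; moreover such $w$ is again an inner vertex (an outer vertex with $v\rightarrow w$ would give $v$ a route to the outer face that, combined with triangulation, I would need to rule out — this is a point to check carefully), hence $w\in S_{\mathrm{in}}$.

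So I would build a directed walk $v_0\rightarrow v_1\rightarrow v_2\rightarrow\cdots$ with every $v_i\in S_{\mathrm{in}}$, obtained by repeatedly applying the failure of the second bullet. Because $\Gamma$ is $y$-monotone and no inner vertex touches a horizontal edge, each step strictly increases the $y$-coordinate, so the walk visits distinct vertices and, the graph being finite, must terminate — but it can only terminate at a vertex of $S_{\mathrm{in}}$ for which the second bullet does \emph{not} fail, contradiction. Symmetrically, if instead I start the analysis from the set $S_{\mathrm{out}}$ of inner vertices with $\outdeg(v)=1$ and use the failure of the third bullet, I get a strictly \emph{decreasing} walk that also must terminate, again a contradiction. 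Either one of these two symmetric arguments suffices; I would present one in full and note the other follows by the up-down symmetry of the statement.

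The main obstacle, I expect, is the bookkeeping at the outer face: I must be sure that when the second bullet fails at an inner vertex $v$ and produces a neighbor $w$ with $v\rightarrow w$ and $\indeg(w)=1$, this $w$ is genuinely an inner vertex, so that the walk stays inside $S_{\mathrm{in}}$ and the termination argument bites. This is where triangulation and $y$-monotonicity must be combined: the outer face is a triangle, so it has exactly one "top" vertex $t$ and one "bottom" vertex $b$, and I need to check that the walk cannot escape into $\{t,b\}$ or the third outer vertex prematurely — for instance, $t$ is the unique vertex with no out-edge and every edge into an outer vertex other than $t$ is "along" the outer boundary in a way that is incompatible with $\indeg(w)=1$ unless $w$ is inner. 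Pinning down this boundary behavior precisely (and symmetrically for $b$ in the $\outdeg$ version) is the delicate step; the rest is a straightforward finiteness/monotonicity argument.
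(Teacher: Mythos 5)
Your overall strategy is the same as the paper's: starting from a suitable inner vertex, repeatedly pass to a successor of indegree $1$, and use $y$-monotonicity (once horizontal edges are excluded) to conclude that such a chain strictly increases in $y$ and therefore cannot continue forever. The paper phrases this as a direct construction of a sequence $v_1,v_2,v_3,\dots$ rather than as a contradiction, but that difference is cosmetic, and the two points you flag as delicate are exactly the two points the paper has to work for.

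The genuine gap is at the launch of the walk. You assert $S_{\mathrm{in}}\neq\emptyset$ because the lowest inner vertex ``can have only one in-edge (any two in-edges would \dots force a lower inner vertex).'' That is false: the in-neighbours of the lowest inner vertex do lie strictly below it, but they need not be inner --- they can be outer-face vertices. Take $K_4$ with the outer triangle $a,b,c$ on rows $1,2,4$ and the inner vertex $v$ on row $3$: then $\indeg(v)=2$, $v$ has no horizontal edge, and $S_{\mathrm{in}}=\emptyset$ (here the lemma holds only via the third bullet). Nor does ``either of the two symmetric arguments suffices'' repair this: under your contradiction hypothesis it is a priori conceivable that \emph{both} $S_{\mathrm{in}}$ and $S_{\mathrm{out}}$ are empty (every inner vertex has $\indeg\geq 2$ and $\outdeg\geq 2$ and no horizontal edge), in which case neither walk starts and no contradiction is reached; a degree count does not exclude this, so a geometric argument is genuinely required. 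This is precisely what the paper supplies before the chase begins: the extreme rows contain only outer-face vertices, hence one of the top and bottom rows (say the bottom, after flipping --- which is also what fixes the choice between the $\indeg$ and $\outdeg$ versions) contains a single vertex, and this is used to argue that the lowest inner vertex either has a horizontal edge or has indegree $1$. Some version of this step is the missing ingredient in your outline. Your second flagged point (that the successor $w$ with $\indeg(w)=1$ is inner) is also resolved in the paper by the same single-bottom-vertex fact: every outer vertex above the bottom one already receives an incoming edge from it, so once it additionally receives $v\rightarrow w$ its indegree is at least $2$ and it cannot be the chosen successor; the ``edges along the outer boundary'' heuristic you sketch is not the right reason.
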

\begin{proof}
Since the outer-face contains three vertices, one of the bottom and the
top row (say the bottom row) 
contains only one vertex.  Let $v_1$ be the inner  vertex
that minimizes the $y$-coordinate among inner  vertices, breaking ties
arbitrarily.  

We know $\indeg(v_1)\geq 1$ since $v_1$ is not on the outer-face.
At most one vertex is strictly below $v_1$ by choice of $v_1$ and
since the bottom row contains only one vertex.
If $\indeg(v_1)\geq 2$, then $v_1$ hence has an incident horizontal edge 
and we are done.  Likewise we are done if $\indeg(v_1)=1$  and 
$\indeg(w)\geq 2$ for all $v_1\rightarrow w$.  So assume that $\indeg(w)=1$
for some $v_1\rightarrow w$, and set $v_2:=w$.  Observe that $v_2$ is an
inner  vertex, since any outer-face vertex $x$ with $y(x)\geq y(v)$ 
has another incoming edge from the vertex on the bottom row.
Repeat with $v_2$: either
it is incident to a horizontal edge, or all its successors have indegree 
$\geq 2$, or it has a successor with indegree 1, which we set to be $v_3$.
Repeat.  Eventually this must find a suitable vertex, since otherwise there
would be an infinite sequence $v_1,v_2,v_3,\dots$
with $y(v_1)<y(v_2)<\dots$.
\qed
\end{proof}

The proof of the following lemma gives the main part of the algorithm 
to straighten out $y$-monotone poly-line drawings:

\begin{lemma}
\label{lem:main}
Let $\Gamma$ be a planar $y$-monotone poly-line drawing of a
triangulated graph $G$.  Then there exists a planar straight-line
drawing $\Gamma'$ of $G$ that has the same $y$-coordinates and
the same left-to-right orders as $\Gamma$.
\end{lemma}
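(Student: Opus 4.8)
The plan is to prove Lemma~\ref{lem:main} by induction on the number of vertices, using Lemma~\ref{lem:cases} to find a vertex that can be contracted and later re-inserted. First I would handle the base case: a triangulated graph on three vertices is a single triangle, and since the drawing is $y$-monotone the three vertices lie on (at most three) distinct rows, so connecting them by straight segments already yields a planar straight-line drawing with the same $y$-coordinates and left-to-right orders. For the inductive step, apply Lemma~\ref{lem:cases} to obtain an inner vertex $v$ falling into one of the three cases. In every case I want to contract $v$ into a suitable neighbour $u$ (for the horizontal-edge case, into the neighbour joined by a horizontal edge; for the $\indeg(v)=1$ case, into the unique in-neighbour; symmetrically for $\outdeg(v)=1$), obtaining a graph $G/vu$ which is again triangulated, together with the induced $y$-monotone poly-line drawing $\Gamma_c$ (the routing of $v$'s other edges is re-attached to $u$, possibly with a bend where $v$ used to sit). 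Here I must check that $\Gamma_c$ is still planar and $y$-monotone and still triangulated — the degree conditions in Lemma~\ref{lem:cases} (all successors of $v$ have indegree $\ge 2$, etc.) are exactly what guarantees no multi-edge or face-collapse arises from the contraction.

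Next I would invoke the induction hypothesis on $G/vu$ with drawing $\Gamma_c$ to get a straight-line drawing $\Gamma_c'$ with the same $y$-coordinates and the same left-to-right orders in each row. The crux is then to re-insert $v$ into $\Gamma_c'$: we must place $v$ at its original $y$-coordinate, in the slot dictated by the required left-to-right order, and then be able to draw all edges at $v$ as straight segments without crossings. The key geometric fact to establish is that the set of valid positions for $v$ — points that see all of $v$'s neighbours along straight segments staying inside the appropriate faces — is a nonempty (in fact open, convex-like) region that meets the horizontal line $y=y(v)$ in an interval, and moreover an interval lying in the correct position relative to the already-placed vertices/edges of that row. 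In the $\indeg(v)=1$ case, $v$ has a single edge going down to $u$ (now a straight segment) and several edges going up; the faces around $v$ on the up-side are triangles of $\Gamma_c'$, and I would argue that $v$ can be slid along a short horizontal segment emanating from $u$'s position (or from the point where $v$ sat) while keeping all up-edges straight and inside their faces, because those faces are ``fanned out'' above $u$. The horizontal-edge case is the easiest: $v$ and its horizontal partner $u$ share a row, and $v$ can be placed immediately next to $u$ in that row inside the (degenerate, flat) region between the two triangles incident to the edge $uv$.

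The main obstacle I expect is precisely this re-insertion step: showing that the feasible region for $v$ on row $y(v)$ is nonempty and correctly ordered. Contracting $v$ into $u$ merges faces and moves things, so after the inductive call the local picture around $u$ may look quite different from the original $\Gamma$, and I need to argue that there is still room to pull $v$ back out. The tool I would use is that in a straight-line drawing of a triangulated graph, the union of the triangular faces incident to a vertex $u$ on one side of a given direction is a star-shaped polygon with $u$ on its boundary; a vertex placed slightly inside this polygon, on the horizontal line through the desired height, and connected straight to the appropriate boundary vertices, sits inside a triangulation of that polygon. One must then match this up with the left-to-right-order requirement — that is, verify that the horizontal line $y=y(v)$ crosses this star-shaped region exactly where the order prescribes — and this is where the $y$-monotonicity of $\Gamma$ and the fact that $\Gamma_c'$ inherits the left-to-right orders of $\Gamma_c$ (which inherits them from $\Gamma$) get used in tandem. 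A careful case analysis on which of $v$'s neighbours lie in row $y(v)$ versus strictly above/below, together with the degree bounds from Lemma~\ref{lem:cases} to ensure the relevant faces are genuine triangles, should close the argument; Lemma~\ref{lem:planar} can be used at the end as a sanity check that the reconstructed drawing is planar with the intended embedding.
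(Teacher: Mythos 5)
Your overall strategy (induct by contracting a vertex supplied by Lemma~\ref{lem:cases}, recurse, then re-insert) matches the paper, and your base case and the horizontal-edge case are essentially right. But there are two genuine gaps. First, you claim that the degree conditions of Lemma~\ref{lem:cases} are ``exactly what guarantees no multi-edge \dots arises from the contraction.'' They are not. When you contract $v$ into $u$, a multiple edge arises whenever $u$ and $v$ have a common neighbour other than the two forced ones ($x_1$ and $x_d$, or $z_1$ and $z_2$ in the horizontal case); such a common neighbour exists exactly when $\{u,v,x_i\}$ is a separating triangle, and nothing in Lemma~\ref{lem:cases} rules that out. The paper handles this with a separate first case: if $G$ has a separating triangle, split $G$ along it, recurse on the inside and outside pieces independently, and glue the inner drawing back into the triangular face by a horizontal translation and shear (which preserves $y$-coordinates). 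Only once separating triangles are excluded can one assert that the contraction yields a simple triangulated graph. Your proof as written would break on any triangulated graph with a separating triangle.

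Second, your re-insertion argument in the $\indeg(v)=1$ case is not just unfinished but pointed in the wrong direction: you propose to ``slide $v$ along a short horizontal segment emanating from $u$'s position,'' yet $u$ lies in a strictly lower row than $v$ (Case~2 not applying means $v$ has no horizontal edge), so no point on row $y(v)$ is near $u$. This is where the degree condition $\indeg(w)\ge 2$ for all $v\rightarrow w$ is actually used: it forces the upper neighbours $x_1,\dots,x_d$ of $v$ to have unimodal $y$-coordinates, i.e.\ $y(x_1)\le\dots\le y(x_m)\ge\dots\ge y(x_d)$ with no interior local minimum (a local minimum $x_i$ would have $\indeg(x_i)=1$). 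The paper then places $v$ at the point $p(v)$ where the straightened edge $(u,x_m)$ crosses row $y(v)$, and the unimodality is precisely what guarantees that the segment from $p(v)$ to each $x_i$ stays inside the emptied union of faces $F_i$, so $p(v)$ sees all its neighbours. Your appeal to a generic star-shaped region around $u$ does not by itself produce a point of the kernel lying on the prescribed row and in the prescribed left-to-right slot; identifying $p(v)$ on $(u,x_m)$ and proving visibility via unimodality is the missing idea.
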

\begin{proof}  
If $n=3$, then $G$ is a triangle and the result is easily shown.
Now assume $n\geq 4$.  We have three cases:

\smallskip\noindent{\bf Case 1: } $G$ contains a {\em separating triangle},
i.e., a triangle $\{u,v,w\}$ such that there are other vertices both 
inside and outside $\{u,v,w\}$.

Let $G_0$ be the graph induced by $\{u,v,w\}$ and all vertices outside it,
and let $G_1$ be the graph induced by $\{u,v,w\}$ and all vertices inside it.
Recursively apply the lemma to these two graphs, with the drawing the one
induced by $\Gamma$, to obtain the straight-line drawings $\Gamma_0$ of $G_0$
and $\Gamma_1$ of $G_1$.  The transformation preserves $y$-coordinates 
and left-to-right-orders for the two copies of $\{u,v,w\}$. 
After a horizontal translation and a horizontal
shear, the drawing $\Gamma_1$ can hence be inserted into the face $\{u,v,w\}$
in the drawing $\Gamma_0$ to give a straight-line drawing of $G$.  All
properties are easily verified.

\smallskip\noindent{\bf Case 2: }
$\Gamma$ contains an inner  vertex $v$ with an incident horizontal edge 
$(v,w)$.  Let $w,z_1,x_1,\dots,x_d,z_2$ be the neighbours of $v$ in
clockwise order; note that $z_1,z_2$ are also neighbours of $w$ since
the faces incident to $(v,w)$ are triangles.  If any $x_i$ is a neighbour
of $w$ then $\{v,w,x_i\}$ is a separating triangle and we are
done by Case~1.
So assume that $v,w$ only have $z_1,z_2$ as common neighbours.
Contract $v$ into $w$ and delete the duplicate copies 
of $(w,z_1)$ and $(w,z_2)$; this then yields a simple graph $G_0$.
Create a
poly-line drawing $\Gamma_0$ of $G_0$ 
by routing the edge from $x_i$ to $w$ along the route of $(x_i,v)$
until the row above/below $v$, and then to $w$.
See Figure~\ref{fig:case2}(left).

By induction transform $\Gamma_0$ into a planar straight-line drawing $\Gamma_0'$
with the same $y$-coordinates and same left-to-right-orders.    
Deleting the edges $(w,x_i)$ 
from $\Gamma_0'$ leaves a polygon $P$ 
defined by $w,z_1,x_1,\dots,x_d,z_2,w$ into which
we must insert $v$.  The boundary of $P$ consists of edges of $G_0$
since $G_0$ is triangulated.  Vertex $w$ is adjacent to all vertices of $P$,
hence $P$ is star-shaped with $w$ in its kernel.  Since $(w,x_i)$
(for $i=1,\dots,d$) was drawn in $\Gamma_0'$ without overlapping,
it belongs to the interior of $P$.  Therefore the kernel actually
contains an open region around $w$.  Inside this open region we can
find a point $p(v)$, with $y$-coordinate $y(v)$ and to the right of $w$,
where we place $v$.  Since $v$ is in the kernel of $P$, we can then connect
all its edges without introducing crossings and obtain $\Gamma'$.  By
construction the $y$-coordinate of $v$ is the same as in $\Gamma$,
and one easily verifies that left-to-right-orders are the same in all
rows the clockwise order of edges at $v$ is correct in $\Gamma'$.

\begin{figure}[t]
\hspace*{\fill}
\input{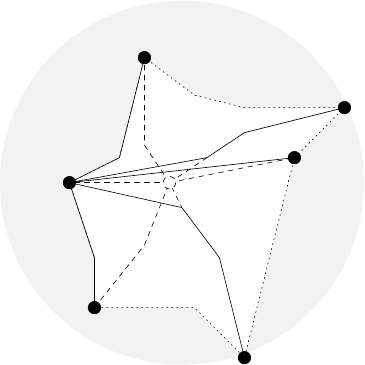_t}
\hspace*{\fill}
\input{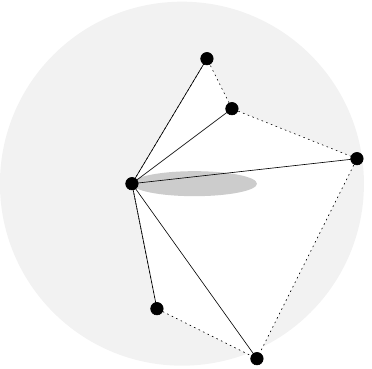_t}
\hspace*{\fill}
\caption{Case 2: An inner vertex $v$ has a horizontal edge $(v,w)$.
(Left) Contracting $v$ into $w$ and how to re-route edges.  (Right)
We can insert $v$ in the kernel.}
\label{fig:case2}
\end{figure}

\smallskip\noindent{\bf Case 3: }  None of the above.
By Lemma~\ref{lem:cases}, there exists an inner  vertex $v$ with
$\indeg(v)=1$ and $\indeg(w)\geq 2$ for all $v\rightarrow w$. (The
case of a vertex $v$ with $\outdeg(v)=1$ and $\outdeg(w)\geq 2$ for
all $w\rightarrow v$ is symmetric.)

Let $u,x_1,\dots,x_d$ be the neighbours of $v$ in clockwise order, 
where $u \rightarrow v$ is the unique incoming edge of $v$.  
Since Case 2 does not apply, $v$ has no incident horizontal edge
and so $y(u)<y(v)<\min\{y(x_1),y(x_d)\}$.
Assume for contradiction that
$y(x_{i-1})>y(x_i)<y(x_{i+1})$ for some $1< i<d$, i.e., that there
exists a local minimum (with respect to $y$-coordinates) among the
neighbours above $v$.  Then $\indeg(x_i)=1$, since the incoming edges
are consecutive in the cyclic order, but the edges before and after
$v\rightarrow x_i$ at $x_i$ are both outgoing.  This contradicts the
choice of $v$.  It follows that 
%
$$y(v)<y(x_1)\leq y(x_2)\leq \dots\leq y(x_m)\geq y(x_{m+1})\geq \dots 
\geq  y(x_d)>y(v)$$
%
for some $1\leq m\leq d$.
See Figure~\ref{fig:Case3}(left).
Let $G_0$ be the graph obtained from $G$ by contracting $v$ into $u$
and deleting the duplicate copies of $(u,x_1)$ and $(u,x_d)$ that
result.
%
This does not create multiple edges since the only common neighbours of
$u$ and $v$ are $x_1$ and $x_d$ (otherwise we'd have a separating triangle
and be in Case 1.)
Let $\Gamma_0$ be the poly-line drawing obtained from $\Gamma$ by deleting
$v$ and then inserting the above edges; this can be done by drawing 
$(u,x_i)$ near the drawings of $(u,v)$ and $(v,x_i)$ for $i=1,\dots,d$.  

\begin{figure}[t]
\hspace*{\fill}
\input{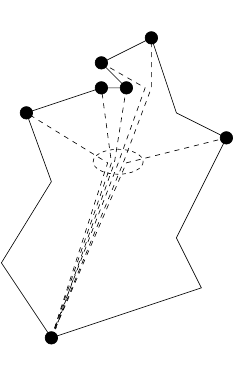_t}
\hspace*{\fill}
\input{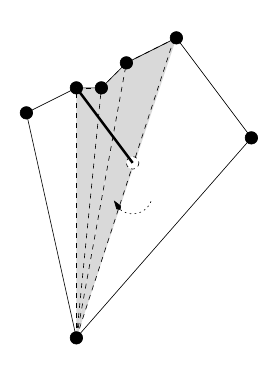_t}
\hspace*{\fill}
\caption{Case 3: $\indeg(v)=1$ and $\indeg(w)\geq 2$ for $v\rightarrow w$.
(Left) Contracting $v$ into $u$.
(Right) Point $p(v)$ can see $x_i$.
}
\label{fig:Case3}
\end{figure}

By induction we can convert $\Gamma_0$ into a planar straight-line drawing
$\Gamma_0'$ with the same $y$-coordinates and left-to-right-orders.  Delete
the added edges from $\Gamma_0'$, and let $p(v)$ be the point where the
edge $(u,x_m)$ intersected the row with $y$-coordinate $y(v)$.    
We claim that point $p(v)$ can see all of $u,x_1,\dots,x_d$:
\begin{itemize}
\item Clearly $p(v)$ can see $u$ and $x_m$ since it is placed on where edge
$(u,x_m)$ was drawn as a straight line.
\item For any $1 \leq i < m$ consider the polygon $F_i$ defined by
the union of the faces $\{u,x_i,x_{i+1}\},\dots,\{u,x_{m-1},x_m\}$ 
in $\Gamma_0'$.  Since these were faces, and we deleted the edges
$(u,x_{i+1}),\dots,(u,x_m)$, polygon $F_i$ has an empty interior.  
Since $y(p(v))=y(v)<y(x_i)$, while $y(x_i)\leq y(x_{i+1} \leq \dots
\leq y(x_m)$, the line segment from $p(v)$ to $x_i$ is 
inside $F_i$,
so $p(v)$ can see $x_i$.
\item Similarly we prove that $p(v)$ can see any $x_i$ with $m<i\leq d$.
\end{itemize}
Hence placing $v$ at $p(v)$ gives a planar drawing that respects
the $y$-coordinate and planar embedding at $v$.
\qed
\end{proof}

One interesting consequence of this proof is that for any
$y$-monotone planar poly-line drawing of height $h$, there exists
a pair of vertices such that contracting them yields again
a planar $y$-monotone poly-line drawing of height $h$.  This contrasts with
a result in \cite{FLW03} that having a planar drawing of height $h$
is {\em not} always preserved under contracting (arbitrary) edges.

\subsection{Triangulating $y$-monotone poly-line drawings}

For space
reasons we only sketch here how to triangulate $y$-monotone poly-line
drawings; the appendix gives all details.
As a first step, add one new row each at the top and the bottom.
	Add one new vertex in the new bottom row and two new vertices in
	the new top row, and connect them, using $y$-monotone curves,
	 with a triangle that encloses
	the entire drawing.  

As next step, ensure that every inner vertex $v$ has a neighbour $w$
	with $y(w)>y(v)$.  If $v$ has none, then go upward from $v$
	until we hit an edge $e$ (this must happen since $v$ is not on
	the outer-face).  Let $w$ be the head of $e$, and note that
	we can route the new edge $(v,w)$ by going upward from $v$
	and then going parallel to $e$.
	Similarly ensure that each inner vertex $v$ has a neighbour with
	a smaller $y$-coordinate.

Now every inner face $f$ is drawn $y$-monotone, with the
	only horizontal edges on $f$ at its minimum or maximum $y$-coordinate.
	One can argue that under this restriction {\em any} 
	two non-adjacent vertices of $f$ can be connected with a 
	$y$-monotone path inside $f$.  
	Hence we can triangulate $f$ in any way that does not insert multiple
	edges and eventually
	obtain a $y$-monotone poly-line drawing of a triangulated graph.

Convert the drawing into a straight-line drawing as in Lemma~\ref{lem:main},
then delete the added edges and vertices.  The added rows are then 
	empty (since $y$-coordinates are preserved) and so can also be
	deleted.  So we have:

\begin{theorem}
\label{thm:PL_SL}
\label{thm:main}
\label{th:main}
Any planar $y$-monotone poly-line drawing $\Gamma$ can be transformed into a 
planar straight-line drawing $\Gamma'$ with the same $y$-coordinates
and the same left-to-right orders in each row.
\end{theorem}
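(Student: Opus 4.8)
The plan is to reduce the general statement to Lemma~\ref{lem:main}, which already handles the triangulated case, by first augmenting $\Gamma$ into a $y$-monotone poly-line drawing of a triangulated graph $\widehat G$ whose restriction to $G$ still has the same $y$-coordinates and left-to-right orders, then straightening $\widehat G$, and finally removing everything that was added. So the proof has three phases: augment, straighten, restore.

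\textbf{Phase 1 (augmentation).} First I would enlarge the drawing by one empty row at top and bottom, place one new vertex in the new bottom row and two in the new top row, and join them by $y$-monotone curves into a triangle enclosing all of $\Gamma$; this makes the outer face a triangle while keeping the drawing planar and $y$-monotone and without disturbing any existing $y$-coordinate or row order. Next, for every inner vertex $v$ lacking a neighbour strictly above it, I would shoot a vertical ray upward from $v$ until it first meets an edge $e=(a,b)$ with $y(b)$ maximal; routing a new edge from $v$ up along that ray and then alongside $e$ to its upper endpoint $w$ gives a $y$-monotone edge $(v,w)$ that can be added without crossings. Symmetrically I would ensure every inner vertex has a neighbour strictly below. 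After this, every inner face $f$ is $y$-monotone with horizontal edges (if any) occurring only at the lowest or highest row of $f$. The key combinatorial claim here is that in such a face any two non-adjacent vertices can be joined by a $y$-monotone curve inside $f$: walking along the boundary from one vertex toward the other on the side that does not force a $y$-reversal, one can push a monotone chord through the interior. Using this, I triangulate each inner face greedily, always choosing a chord between non-adjacent vertices and never creating a parallel edge, until $\widehat G$ is fully triangulated; the drawing stays $y$-monotone, planar, and preserves $y$-coordinates and row orders of the original vertices and edges.

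\textbf{Phase 2 (straightening).} Apply Lemma~\ref{lem:main} to $\widehat G$ with its augmented drawing to obtain a planar straight-line drawing $\widehat\Gamma'$ of $\widehat G$ with the same $y$-coordinates and the same left-to-right orders in each row.

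\textbf{Phase 3 (restoration).} Delete from $\widehat\Gamma'$ all edges and vertices that were added in Phase~1, leaving a straight-line drawing $\Gamma'$ of $G$. Since deletion removes only drawn segments, $\Gamma'$ is still planar and straight-line; since $y$-coordinates were preserved throughout, every vertex of $G$ has its original $y$-coordinate, the two extra rows are now empty and can be discarded, and the left-to-right order of the segments of $G$ in each remaining row is exactly what it was in $\Gamma$ (by Lemma~\ref{lem:planar} this also means the embedding and outer face are preserved). This yields the theorem.

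I expect the main obstacle to be Phase~1, specifically the claim that every inner face of the augmented drawing is $y$-monotone and that non-adjacent vertices of such a face admit a $y$-monotone interior chord avoiding the creation of multi-edges; one must be careful that the new edges added to give each inner vertex an up- and down-neighbour really can be routed without crossings and do not themselves destroy $y$-monotonicity of faces, and that the greedy triangulation never stalls. Once faces are known to be $y$-monotone the chord existence is intuitively clear but needs a careful case analysis on where the two chosen vertices sit relative to the extreme rows of the face. Phases~2 and~3 are then immediate given Lemma~\ref{lem:main} and Lemma~\ref{lem:planar}.
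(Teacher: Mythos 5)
Your proposal is correct and follows essentially the same route as the paper: enclose the drawing in a triangle using new top and bottom rows, add $y$-monotone edges so every inner vertex has neighbours strictly above and below, triangulate the resulting $y$-monotone inner faces via monotone chords, apply Lemma~\ref{lem:main}, and delete the added material. The paper's appendix additionally normalizes the drawing to be ``short'' and strictly $y$-monotone before adding edges, a technical preprocessing step that makes the routing and chord arguments you flag as delicate go through cleanly.
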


The proof of this theorem is algorithmic, and clearly leads to a 
quadratic-time algorithm.  Reducing this run-time remains an
open problem.  It is not hard to build a data structure to
find a vertex $v$ as in Lemma~\ref{lem:cases} in amortized constant
time, but it is not clear how we can test in less than linear time
per vertex $v$ 
whether $v$ belongs to a separating triangle, because the graph
changes due to the edge contractions.

\section{Optimal height means exponential width}

While our transformation in Theorem~\ref{thm:PL_SL}
keeps the height intact,
the width can increase dramatically.    Our construction does not
give a grid drawing, but it is clear that with
minor modifications one can achieve rational coordinates,
hence integer coordinates after horizontal scaling.
We do not know upper bounds on these grid coordinates, but we can
argue that they are at least exponential for some graphs.
To do so, we first study special drawings of one graph:


\begin{figure}[t]
\hspace*{\fill}
\input{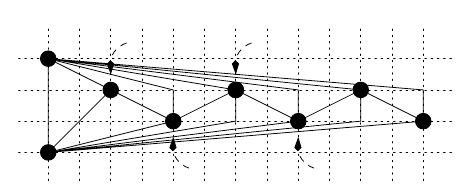_t}
\hspace*{\fill}
\input{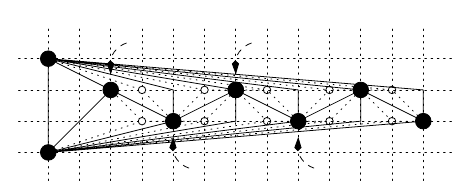_t}
\hspace*{\fill}
\caption{(Left) A planar graph. (Right) Inserting vertices into inner faces.}
\label{fig:bad_example}
\label{fig:bad_bad}
\end{figure}

\begin{lemma}
\label{lem:bad}
Let $G$ be the graph shown in 
Figure~\ref{fig:bad_example}(left). 
Then any planar straight-line drawing  $\Gamma$ that respects
the $y$-coordinates and left-to-right-orders of 
Figure~\ref{fig:bad_example}
has width at least $\frac{1}{3}2^{n-1}$.
\end{lemma}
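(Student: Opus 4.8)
The plan is to show that in every drawing $\Gamma$ that respects the prescribed $y$-coordinates and left-to-right orders, the $x$-coordinates of the vertices $a_1,\dots,a_d$ of Figure~\ref{fig:bad_example} grow geometrically, essentially doubling at each index; since $n=d+O(1)$ this forces an $\Omega(2^{n})$ width, with the precise constant $\frac{1}{3}2^{n-1}$ coming out of the bookkeeping. The first thing to observe is that a straight-line drawing is automatically $y$-monotone, so Lemma~\ref{lem:planar} applies and $\Gamma$ has the same planar embedding (rotation system and outer face) as the reference drawing; combined with the prescribed left-to-right order in each row, this pins down at every vertex not just the cyclic order of incident edges but also its orientation relative to the $x$-axis, and in particular that the edges from $u$ (resp.\ from $v$) to $a_1,\dots,a_d$ leave $u$ (resp.\ $v$) in left-to-right angular order. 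I would then translate $\Gamma$ so that the leftmost vertex has $x$-coordinate $1$, so that the width equals the largest vertex $x$-coordinate.

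Write $\xi_i$ for the $x$-coordinate of $a_i$. The rows are consecutive integers, $u$ and $v$ occupy the extreme rows, and the $a_i$ alternate between the two middle rows, so two consecutive $a_i,a_{i+1}$ lie one row apart with exactly one of $u,v$ on the far side of $a_i$. For that anchor $w\in\{u,v\}$, the fixed rotation at $w$ forces the edge $wa_{i+1}$ to leave $w$ strictly to the right (angularly) of the edge $wa_i$, so the ray from $w$ through $a_{i+1}$ lies, at every height, to the right of the ray from $w$ through $a_i$. Since the rows involved are one unit apart, the latter ray reaches $a_{i+1}$'s row at $x$-coordinate $2\xi_i$ minus a correction depending only on $x(w)$; and $x(w)$ is small, because the edge $uv$ is the leftmost feature of the drawing and hence its crossing with each middle row lies to the left of every $a_i$ in that row. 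This gives a recurrence $\xi_{i+1}\ge 2\xi_i-c$ with $c$ bounded by a constant (in terms of $\xi_1$), which unwinds to $\xi_d=\Omega(2^{d})$.

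With $n=d+O(1)$ this already yields an $\Omega(2^{n})$ lower bound on the width, and a precise count of the auxiliary vertices together with the exact additive constants turns it into the bound $\frac{1}{3}2^{n-1}$ claimed in the lemma.

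I expect the heart of the matter to be establishing the per-step factor of two. The two delicate points are: (i) verifying that the prescribed embedding and the preserved left-to-right orders really do force $wa_{i+1}$ to emanate from $w$ on the correct side of $wa_i$ -- this needs a local case analysis of the faces and rotations around $a_i$ and $a_{i+1}$ and of how the fixed orderings in the two middle rows constrain the crossings of the edges incident to $u$ and $v$; and (ii) checking that the additive corrections (the $x$-coordinates of $u$, $v$ and of the relevant edge crossings) stay small enough -- in particular smaller than $\xi_1$ -- so that they cannot absorb the geometric growth. Once the recurrence $\xi_{i+1}\ge 2\xi_i-c$ is secured, the unwinding and the evaluation of the constant are routine.
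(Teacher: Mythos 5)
Your core mechanism is the same as the paper's: from the preserved embedding/left-to-right orders you extract, for each consecutive pair $a_i,a_{i+1}$, the inequality $x(a_{i+1})\ge 2x(a_i)-x(w)+1$ where $w\in\{u,v\}$ is the anchor on the far side of $a_i$ (the paper phrases it via the crossing point of the edge $(w,a_{i+1})$ with the intermediate row, you via extending the ray $w\,a_i$; these are the same geometric fact). The base case, $x(a_1)$ lying to the right of where $(u,v)$ crosses row~3, i.e.\ $x(a_1)>\tfrac13 x(u)$, also matches.

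The one place your plan would break is step (ii), where you propose to close the recurrence $\xi_{i+1}\ge 2\xi_i-c$ by checking that the corrections are ``smaller than $\xi_1$.'' That is not true and cannot be arranged: after translating $x(v)=0$, the correction is $0$ at every other step but equals $x(u)$ at the remaining steps, and the drawing only forces $x(a_1)>\tfrac13 x(u)$, so $x(u)$ may be up to about $3\xi_1$. A naive unwinding $\xi_i-c\ge 2^{i-1}(\xi_1-c)$ with $c=x(u)$ therefore gives nothing when $x(u)\ge\xi_1$. The repair is exactly what the paper does: carry $x(u)$ through the induction with two alternating hypotheses, $x(a_{2i-1})\ge\tfrac13(x(u)+2^{2i})-1$ and $x(a_{2i})\ge\tfrac13(2x(u)+2^{2i+1})-1$, so that the $-x(u)$ incurred at the $u$-anchored steps is absorbed by the coefficient of $x(u)$ dropping from $\tfrac23$ back to $\tfrac13$ rather than by comparison with $\xi_1$. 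With that bookkeeping (plus the integrality ``$+1$'' at each step) your argument becomes the paper's proof; everything else in your sketch, including the use of Lemma~\ref{lem:planar} to fix the embedding and the reduction $n=d+2$, is fine.
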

\begin{proof}
Denote by $x(w)$ the $x$-coordinate of vertex $w$ in drawing $\Gamma$.
We will assume that $x(v)\leq x(u)$; the other case is proved similar
and in fact gives an even larger  width bound.  After possible translation,
we may also assume $x(v)=0$.  
We will show by induction on $i$ that 
$$
x(a_{2i-1})\geq \frac{1}{3}(x(u)+2^{2i})-1 
\quad\mbox{ and }\quad
x(a_{2i})\geq \frac{1}{3}(2x(u)+2^{2i+1})-1
$$
for $i\geq 1$; this implies the result since the width is then
$x(a_d)-x(v)+1\geq \frac{1}{3}(x(u)+2^{d+1})\geq \frac{1}{3}2^{d+1} = \frac{1}{3}2^{n-1}$. 

\begin{figure}[t]
\hspace*{\fill}
\input{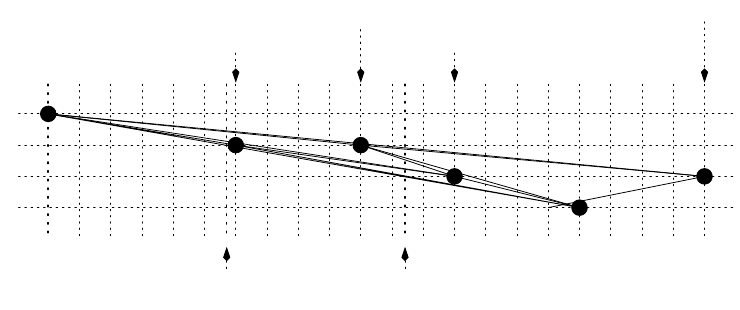_t}
\hspace*{\fill}
\caption{Computing the required width 
in a straight-line drawing.}
\label{fig:bad_example_2}
\end{figure}

Consider vertex $a_1$, which is placed on row $3$.  
The straight-line segment $(u,v)$ crosses row $3$ at $x$-coordinate
$\frac{1}{3}x(u)$, and $a_1$ must be to the right of that, so
$$x(a_1)\geq \lfloor \frac{1}{3}x(u) \rfloor + 1 \geq \frac{1}{3}x(u)
+\frac{1}{3} = \frac{1}{3}(x(u)+2^2)-1$$
as desired.  Now consider vertex $a_{2i}$ for $i\geq 1$.  The line segment
of edge $(a_{2i},v)$ crosses row $3$ at $x$-coordinate
$\frac{1}{2}x(a_{2i})$.
Since left-to-right-orders are preserved, this
crossing must be to the right of $a_{2i-1}$, therefore
$x(a_{2i-1})<\frac{1}{2}x(a_{2i})$ or $2x(a_{2i-1}) < x(a_{2i})$.
By integrality therefore
$$x(a_{2i}) \geq 2x(a_{2i-1})+1 \geq \frac{1}{3}(2x(u)+2\cdot 2^{2i}) - 2 + 1$$
as desired.
Finally consider vertex $a_{2i+1}$ for $i\geq 3$.  The line
segment of edge $(a_{2i+1},u)$ crosses row 2 at $x$-coordinate
$(x(u)+x(a_{2i+1}))/2$.  Since left-to-right-orders
are preserved, this crossing must be to the right of $a_{2i}$, therefore
$x(a_{2i})<\left(x(u)+x(a_{2i+1})\right)/2$ or
$2x(a_{2i})-x(u)<x(a_{2i+1})$.  By integrality therefore
$$x(a_{2i+1})\geq 2x(a_{2i}) -x(u) + 1 \geq 
\frac{1}{3}(4x(u)+2\cdot 2^{2i+1})-2-x(u)+1
=\frac{1}{3}(x(u)+ 2^{2i+2})-1$$
as desired.
%
\qed
\end{proof}

\begin{theorem}
\label{thm:bad}
There exists a graph $H$ that has a planar straight-line drawing on four rows,
but any planar straight-line drawing on four rows has width 
at least $\frac{1}{3}2^{n/3}$.
\end{theorem}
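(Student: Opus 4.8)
The plan is to reuse the graph $G$ of Lemma~\ref{lem:bad}, with vertices $u,v,a_1,\dots,a_d$ (so $n_G=d+2$ and $G$ has $2d$ faces), and to ``rigidify'' it: let $H$ be obtained from $G$ by inserting one new vertex into each inner face of $G$ and joining it to the three vertices of that face, as in Figure~\ref{fig:bad_bad}(right). Then $n:=|V(H)|=n_G+(2d-1)=3d+1$, so $d=(n-1)/3$, and a width bound of $\frac{1}{3}2^{d+1}$ for $H$ gives $\frac{1}{3}2^{(n+2)/3}\ge\frac{1}{3}2^{n/3}$, as wanted. That $H$ has some planar straight-line drawing on four rows is the easy direction: start from a four-row straight-line drawing of $G$ realising Figure~\ref{fig:bad_example} (for instance the one implicit in Lemma~\ref{lem:bad}), scaled so widely that every inner face of $G$ -- a non-degenerate triangle spanning at least three of the four rows -- contains, on some row strictly between its extreme rows, an interior interval of positive length; place each new vertex on such an interval and route its three edges inside the face.

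For the lower bound I would take an arbitrary planar straight-line drawing $\Gamma$ of $H$ on at most four rows and show that its restriction to $G$ coincides with Figure~\ref{fig:bad_example} up to the symmetries of $G$ (horizontal reflection, vertical reflection, reversal of $a_1,\dots,a_d$, and the swap $u\leftrightarrow v$); Lemma~\ref{lem:bad} then finishes. The ingredients would be: (i) $G$ is $3$-connected (it has no $1$- or $2$-cut) and inserting degree-$3$ vertices preserves $3$-connectivity, so $H$, and hence $G$ as a subgraph, has a unique planar embedding up to reflection, with the faces of $G$ being $\{u,v,a_1\}$, $\{u,v,a_d\}$, $\{u,a_j,a_{j+1}\}$ and $\{v,a_j,a_{j+1}\}$; (ii) since an interior point of a triangle has $y$-coordinate strictly between the corners' extreme $y$-coordinates, and every new vertex is an inner vertex of $H$ (with at most one exception, if the outer face of $\Gamma$ is a subtriangle of one face of $G$), every inner face of $G$ spans at least three of the four rows, so for all but $O(1)$ indices $j$ both $\{y(u),y(a_j),y(a_{j+1})\}$ and $\{y(v),y(a_j),y(a_{j+1})\}$ have range $\ge 2$; (iii) every $a_j$ except the at most two lying on the outer triangle is an inner vertex of the triangulation $H$, hence its row lies strictly between the extreme rows of its neighbours $u,v,a_{j-1},a_{j+1}$.

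Then I would run a short case analysis on the pair $(y(u),y(v))$, assuming $d$ large. If $\{y(u),y(v)\}\ne\{1,4\}$ -- which also covers $y(u)=y(v)$ and drawings that use fewer than four rows -- the two range conditions from (ii) together with (iii) force almost all $a_j$ to need either a neighbour on a nonexistent row or more outer-face vertices than a triangle can have, a contradiction. Hence $\{y(u),y(v)\}=\{1,4\}$; the range conditions then force $\{y(a_j),y(a_{j+1})\}=\{2,3\}$ for all but $O(1)$ of the $j$ and then for all $j$, so the $a_j$ alternate between rows $2$ and $3$. Now $u$ and $v$ lie on the extreme rows, hence on the outer face, so the outer face is a triangle containing $u$ and $v$, i.e.\ $\{u,v,a_1\}$ or $\{u,v,a_d\}$; applying the symmetries of $G$ we may assume it is $\{u,v,a_d\}$ with $y(u)=1$, $y(v)=4$, $y(a_i)=3$ for odd $i$ and $=2$ for even $i$ -- exactly Figure~\ref{fig:bad_example}. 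A last observation that a planar straight-line drawing of a $3$-connected graph is ``levelled'', meaning the left-to-right order of vertices and of edge-row-crossings in each row is determined by the embedding and the $y$-coordinates up to horizontal reflection, upgrades this to: $\Gamma|_G$ has the $y$-coordinates and left-to-right orders of Figure~\ref{fig:bad_example}. Lemma~\ref{lem:bad} then yields width $\ge \frac{1}{3}2^{d+1}\ge\frac{1}{3}2^{n/3}$.

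The step I expect to be the main obstacle is the case analysis in the third paragraph: proving that ``four rows, plus a vertex forced into every inner face of $G$'' leaves no room for an alternative level assignment or a different outer face. The unique embedding from $3$-connectivity and the ``spans $\ge 3$ rows'' consequence narrow things sharply, but carefully ruling out, say, the outer face being a tiny internal subtriangle -- which would confine most of $G$, including the nested triangles $\{u,v,a_j\}$, inside a region of height below $4$ -- and justifying the ``levelled drawing'' rigidity used at the end, are where the real work lies.
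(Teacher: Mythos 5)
Your construction of $H$ and your overall plan (force the restriction of any four-row drawing to $G$ to coincide with Figure~\ref{fig:bad_example}, then invoke Lemma~\ref{lem:bad}) match the paper's proof. But the proposal has a genuine gap exactly where you flag it: the case analysis that pins $\{y(u),y(v)\}=\{1,4\}$ and forces the $a_j$ to alternate on rows $2$ and $3$ is asserted, not proved, and the ``range $\geq 2$'' conditions of your item~(ii) alone do not suffice. For instance, $y(u)=2$, $y(v)=3$ with the $a_j$ alternating between rows $1$ and $3$ satisfies every one of your face-range constraints; ruling it out needs the stronger observation that an inner vertex of a straight-line triangulation must have a neighbour \emph{strictly} below and strictly above it (so no inner vertex can sit on the topmost or bottommost occupied row), combined with the fact that the outer face of $H$ has only three vertices and that each separating triangle $\{u,v,a_j\}$ strictly contains a non-degenerate face of $H$ and hence must span all four rows. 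Your item~(iii) gestures at this but states it incorrectly (the row of $a_j$ lies strictly between the extremes of its neighbours \emph{in $H$}, not of $u,v,a_{j\pm1}$ only), and the assembly of these facts into a contradiction is precisely the ``real work'' you defer. The paper's route for this step is different and fully worked out: each nested triangle $\{u,v,a_{3i}\}$, $i=1,2,3$, is forced to have vertices on rows $1$ and $4$, and if neither $u$ nor $v$ were on row $1$ then $a_3,a_6,a_9$ all would be, yielding a planar drawing of $K_{3,3}$ by adding one vertex below row $1$ --- a contradiction. (This is also why the paper takes $d\geq 11$.)

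Two smaller issues. First, the existence direction cites ``the four-row straight-line drawing of $G$ implicit in Lemma~\ref{lem:bad}'', but that lemma only lower-bounds the width of such drawings and constructs nothing; the paper instead exhibits the four-row $y$-monotone poly-line drawing of $H$ in Figure~\ref{fig:bad_bad} and applies Theorem~\ref{thm:main} --- you should do the same (or give an explicit coordinate assignment and verify planarity). Second, your final ``levelled drawing'' rigidity claim --- that embedding plus $y$-coordinates determine left-to-right orders in each row --- is used by the paper as well, so it is not a point of divergence, but it deserves at least the one-line justification the paper gives (uniqueness of the planar embedding of the triangulation) rather than being introduced as a new general principle.
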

\begin{proof}
The graph $H$ is obtained by taking the graph $G$ 
from Figure~\ref{fig:bad_example}(left)
with $d\geq 11$ and inserting into each inner face except $\{u,v,a_1\}$
a new vertex 
adjacent to the three vertices of the face.  
Note that $H$ is triangulated and has $3d$ vertices. 
It has a $y$-monotone poly-line drawing on four rows
(see Figure~\ref{fig:bad_example}(right)), and hence by Theorem~\ref{thm:main}
also a straight-line drawing on 4 rows.

Let $\Gamma_H$ be an arbitrary
planar straight-line drawing of $H$ that uses four rows. 
Let $\Gamma_G$ be the induced planar straight-line drawing of $G$.
The goal is to show that $\Gamma_G-a_d$ satisfies the conditions of
Lemma~\ref{lem:bad}.

{\smallskip\noindent \bf Claim:} 
{\it Edge $(u,v)$ connects the top and bottom row. }
For observe that triangle $\{u,v,a_{3i}\}$ separates 
triangle $\{a_{3i-2},a_{3i-1},x\}$ from triangle
$\{a_{3i+1},a_{3i+2},x'\}$, where $i=1,2,3$ and $x,x'$ are 
suitable vertices from $H-G$.   Regardless of the choice
of outer-face hence triangle $\{u,v,a_{3i}\}$ surrounds
another triangle and must contain vertices in rows 1 and 4
by $y$-monotonicity.  If (say) row 1 contains neither $u$ nor $v$,
then it must hence contain $a_3$, $a_6$ and $a_9$, which means
that we can construct a planar drawing of $K_{3,3}$ by adding
another vertex in row 0 and connecting it to $a_3,a_6,a_9$.
This is impossible, so one of $\{u,v\}$ is in row 1 and the
other in row 4.

\smallskip
If there were vertices both left and right of edge $(u,v)$, then $\{u,v\}$
would be a cutting pair,  which contradicts that $H$ is triangulated. 
So there are no vertices to one side of $(u,v)$. 
After possible horizontal flip of the drawing, 
renaming of $\{u,v\}\rightarrow \{v,u\}$,
and 
renaming $\{a_1,\dots,a_d\}\rightarrow \{a_d,\dots,a_1\}$,
we may assume that $u$ is on row $1$, $v$ is on row 4, the
remaining vertices are to the right of edge $(u,v)$, and the
outer-face is $\{u,v,a_d\}$.  i.e., the same as in 
Figure~\ref{fig:bad_example}

{\smallskip\noindent \bf Claim:} 
{\it For any $1\leq i<d$, vertex $a_i$ is not
on row 1 or 4.}  For if it were on row 1, 
then edge $(u,a_i)$ would be horizontal.  
Edge $(u,a_{i+1})$ comes after $(u,a_i)$
in clockwise order around $u$ in $G$, but is (by the above) to the
right of $u$, which is impossible since there is no lower row.
Similarly $a_i$ is not on row 4 due to edges $(v,a_i)$ and
$(v,a_{i+1})$.

{\smallskip\noindent \bf Claim:} 
{\it For any $1\leq i<d$, vertices $a_i$ and
$a_{i+1}$ are on different rows.}  For assume $a_i$ and $a_{i+1}$ are both
on row 2 (the case of row 3 is symmetric).
Then triangle $\{u,a_i,a_{i+1}\}$
is drawn on two adjacent rows and hence has no grid-point in its interior,
contradicting that in $H$ there exists a vertex inside $\{u,a_i,a_{i+1}\}$.

\smallskip
If necessary, flip $\Gamma_H$ upside down and 
rename $\{u,v\}\rightarrow \{v,u\}$
so that $a_1$ is on row 3.  Therefore $a_2$ is on row 2,
$a_3$ is on row 3, 
and generally $a_i$ is on row 2 for $i<d$ even and on row 3 for $i<d$ odd.
Hence $\Gamma_G-a_d$ is drawn with exactly the $y$-coordinates as in 
Figure~\ref{fig:bad_example}. It also has the same left-to-right-orders,
since the planar embedding is unique.
By Lemma~\ref{lem:bad}, therefore
$\Gamma_G-a_d$ has width at least $\frac{1}{3}2^d=\frac{1}{3}2^{n/3}$.
$\Gamma_H$ can have no smaller width, which proves the theorem.
\qed
\end{proof}

By the first of these claims, the graph requires 4 rows in {\em any}
straight-line drawing.  As a consequence of Theorem~\ref{thm:bad}, we
hence have:

\begin{corollary}
There exists a planar graph such that any planar straight-line drawing 
that has optimal height has exponential area.
\end{corollary}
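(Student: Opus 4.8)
The plan is to show that the very graph $H$ constructed in the proof of Theorem~\ref{thm:bad} witnesses the corollary: I will argue that its optimal height equals four, so that the exponential lower bound on width obtained there applies to \emph{every} height-optimal straight-line drawing, thereby forcing exponential area.

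First I would determine the optimal height of $H$. One inequality is immediate from Theorem~\ref{thm:main}: the $y$-monotone poly-line drawing of $H$ on four rows in Figure~\ref{fig:bad_example}(right) can be straightened out into a planar straight-line drawing of $H$ on four rows. For the matching lower bound I need that $H$ has no planar straight-line grid drawing on three or fewer rows; this is the remark immediately preceding the corollary, and if it had to be reproved I would do so along the lines of the claims inside the proof of Theorem~\ref{thm:bad}. Concretely, $u$ and $v$ are adjacent and share the $d\ge 11$ common neighbours $a_1,\dots,a_d$, arranged in a fan along one side of the edge $(u,v)$. On any extreme row at most two of them can lie, since three common neighbours on, say, the bottom row together with $u$, $v$, and a fresh vertex placed below everything would realize a $K_{3,3}$ minor. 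A short counting argument (using $d\ge 11$) then shows that a straight-line grid drawing on at most three rows would either place more than two $a_i$'s on an extreme row, or force two consecutive path vertices $a_i,a_{i+1}$ onto one row; in the latter case the triangular face $\{u,a_i,a_{i+1}\}$ (or $\{v,a_i,a_{i+1}\}$) would be drawn within two consecutive rows and hence contain no grid point in its interior, contradicting that $H$ has a vertex inside that face. Hence every planar straight-line grid drawing of $H$ uses at least four rows.

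Combining the two bounds, the optimal height of $H$ is exactly four. By Theorem~\ref{thm:bad}, every planar straight-line drawing of $H$ on four rows has width at least $\frac13 2^{n/3}$, so every height-optimal planar straight-line drawing of $H$ has bounding-box area at least $4\cdot\frac13 2^{n/3}$, which is exponential in $n$; this proves the corollary. The only nontrivial piece is the height lower bound, and its one delicate point is ruling out three common neighbours of $u$ and $v$ crowding onto an extreme row — but since this repeats reasoning already carried out for Theorem~\ref{thm:bad}, I expect it to be routine, and the rest is a direct appeal to Theorems~\ref{thm:main} and~\ref{thm:bad}.
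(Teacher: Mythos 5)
Your proposal is correct and follows essentially the same route as the paper: the paper derives the corollary from Theorem~\ref{thm:bad} together with the observation (via the first claim in that proof) that $H$ needs four rows in any straight-line drawing, while the upper bound of four rows comes from Theorem~\ref{thm:main} applied to Figure~\ref{fig:bad_example}(right). You merely spell out the height lower bound in more detail (the $K_{3,3}$ obstruction on an extreme row plus the empty-interior argument for a triangle spanning two adjacent rows), which is exactly the reasoning the paper reuses from the claims inside the proof of Theorem~\ref{thm:bad}.
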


\section{Optimal height for poly-line vs.~straight-line}

Theorem~\ref{thm:main} required $y$-monotonicity of the poly-line drawing.
One can show that this condition cannot be dropped unless we allow
an increase in height and a change of $y$-coordinates.


\begin{theorem}
\label{thm:PL_monPL}
There exists a graph with a planar poly-line drawing on 6 rows that
has no planar $y$-monotone poly-line drawing on 6 rows.
\end{theorem}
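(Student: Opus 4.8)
The plan is to reduce the statement to one about \emph{straight-line} drawings. First note that, by Theorem~\ref{thm:main}, a graph has a planar $y$-monotone poly-line (grid) drawing on $h$ rows if and only if it has a planar straight-line (grid) drawing on $h$ rows: one direction is immediate, since straight-line segments are $y$-monotone, and the other is exactly Theorem~\ref{thm:main}, which preserves $y$-coordinates and hence the set of occupied rows (and one can make the $x$-coordinates integral by horizontal scaling). Consequently it suffices to exhibit a planar graph $G$ with a planar poly-line drawing on $6$ rows but \emph{no} planar straight-line drawing on $6$ rows; equivalently, a planar graph whose poly-line height is $6$ but whose straight-line height is at least $7$.

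For the construction I would start from the rigid ``double fan'' idea underlying Theorem~\ref{thm:bad}: a vertex pair $\{u,v\}$ with enough common neighbours $a_1,\dots,a_d$ forming a path, triangulated by inserting a vertex into each inner face, so that (as in that proof) any planar straight-line drawing on few rows is forced, up to reflection and choice of the outer face, to place $u$ and $v$ on extreme rows and the $a_i$'s alternating on two interior rows, with left-to-right order essentially fixed. I would then add one further vertex $z$ inside a carefully chosen inner face, made adjacent to a set of vertices that, once the row assignment above is forced, occupy a \emph{reflex} (non-convex) configuration spanning the available rows. The goal is to make it geometrically impossible to place $z$ at a grid point that sees all of its prescribed neighbours by straight segments within only $6$ rows (the kernel of the polygon into which $z$ must go becomes empty, or a horizontal edge of that polygon blocks the only admissible position for $z$), whereas with a single bent edge per offending neighbour $z$ can be placed wherever is convenient. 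Exhibiting the resulting $6$-row poly-line drawing of $G$ — identical to a straight-line drawing of the rest, with $z$'s edges routed with bends around the obstruction — is the easy half.

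For the hard half, suppose $\Gamma'$ is a planar straight-line (equivalently $y$-monotone poly-line) drawing of $G$ on $6$ rows. Following the pattern of the proof of Theorem~\ref{thm:bad}, I would pin down the combinatorial picture step by step: since $\{u,v\}$ has many common neighbours, $K_{3,3}$-freeness of planar graphs forbids $u$ and $v$ from lying on the same or adjacent rows and forces the $a_i$'s onto two interior rows in alternating fashion; since a triangular face drawn on two adjacent rows contains no grid point in its interior, each inserted face vertex forces its triangle to span at least three rows, which together with the bound of six rows fixes the rows (and left-to-right order) of all face vertices, hence of every neighbour of $z$. At that point the polygon into which $z$ must be inserted, drawn within the forced rows, has been tuned to have no admissible position for $z$ with straight edges, so $z$ cannot be placed; any repair would require a seventh row. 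This contradicts the $6$-row poly-line drawing constructed above, proving the theorem.

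The main obstacle is exactly the one that makes the proof of Theorem~\ref{thm:bad} delicate, only sharper: one must show that the row assignment and left-to-right order are genuinely forced in \emph{every} straight-line drawing on $6$ rows — ruling out every alternative distribution of the vertices among the rows and every choice of outer face — and at the same time calibrate the size parameters so that the straight-line requirement at $z$ exceeds its poly-line requirement by exactly one row, so that the separating values are $6$ and $7$ rather than something larger. Making the rigidity argument and this tight quantitative calibration line up is where the real work lies; once the rows are pinned, the impossibility of straightening $z$ is a short visibility argument of the same flavour as Case~3 in the proof of Lemma~\ref{lem:main}.
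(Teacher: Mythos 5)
Your opening reduction is sound: by Theorem~\ref{thm:main} (plus the observation that straight-line segments are $y$-monotone), a graph admits a planar $y$-monotone poly-line drawing on $h$ rows if and only if it admits a planar straight-line drawing on $h$ rows, so it would indeed suffice to exhibit a graph with poly-line height $6$ and straight-line height at least $7$. But from that point on you have a plan, not a proof. You never specify the graph: the neighbourhood of the extra vertex $z$, the ``carefully chosen inner face,'' and the ``reflex configuration'' are all placeholders, and you yourself defer the two essential steps --- proving that the row assignment and left-to-right order are forced in \emph{every} $6$-row straight-line drawing, and calibrating the gadget so that the obstruction appears at exactly $6$ rows and not $7$ --- as ``where the real work lies.'' Note also that the rigidity in Theorem~\ref{thm:bad} is established only for $4$-row drawings; with six rows there are many more ways to distribute $u$, $v$, and the $a_i$'s, so the forcing argument would have to be redone from scratch, and you would additionally have to verify that the final, heavily constrained graph still admits \emph{some} $6$-row poly-line drawing. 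None of this is routine, so the proposal has a genuine gap.

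For comparison, the paper's proof avoids straight-line rigidity entirely and argues directly about $y$-monotone poly-line drawings. Its graph is built from nested separators: a $4$-cycle enclosing a triangular prism with triangles $\{a,b,c\}$ and $\{r,s,t\}$, with three copies of $K_{2,5}$ attached across the pairs $\{a,r\}$, $\{b,r\}$, $\{c,r\}$. The nesting forces the inner triangle onto rows $3$ and $4$ and, using $y$-monotonicity, forces $a$ and $c$ onto rows $2$ and $5$; the key lemma is that any planar drawing of $K_{2,5}$ on three rows must put its $2$-side vertices on the top and bottom of those three rows, which is then violated whichever of rows $3$ or $4$ the vertex $r$ occupies. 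This combinatorial pigeonhole argument is much shorter than the metric forcing you envision, and it is worth internalizing as an alternative to geometry-based lower bounds.
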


The graph and its poly-line drawing on 6 rows are shown in 
Figure~\ref{fig:not_monotone_nolabel}, and details of the proof are
in the appendix.  Since straight-line drawings are $y$-monotone
poly-line drawings,
we hence have that poly-line drawings are sometimes better than
straight-line drawings:

\begin{corollary}
There exists a graph with a planar poly-line drawing on 6 rows that
has no straight-line drawing on 6 rows.
\end{corollary}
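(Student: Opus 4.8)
The corollary is immediate once Theorem~\ref{thm:PL_monPL} is in hand, so the plan is simply to invoke it. The observation needed is that every planar straight-line drawing is, in particular, a planar $y$-monotone poly-line drawing: a single line segment has $y$-coordinate that changes monotonically from one endpoint to the other (strictly so unless the segment is horizontal), and it has no bends at all. Hence a straight-line drawing of a graph $G$ on $6$ rows would be a planar $y$-monotone poly-line drawing of $G$ on $6$ rows. Applying this to the graph of Figure~\ref{fig:not_monotone_nolabel}, which by Theorem~\ref{thm:PL_monPL} has a planar poly-line drawing on $6$ rows but no planar $y$-monotone such drawing on $6$ rows, we conclude it has no planar straight-line drawing on $6$ rows, which is the claim of the corollary.

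Since all the content sits in Theorem~\ref{thm:PL_monPL}, I describe how I would prove that. The positive half is to exhibit a graph $H$ together with an explicit planar poly-line drawing of it on $6$ rows; this is a finite check, and Figure~\ref{fig:not_monotone_nolabel} is precisely such a witness. The negative half --- that $H$ admits no planar $y$-monotone poly-line drawing on $6$ rows --- is the real work. I would first pin down the possible planar embeddings of $H$: choosing $H$ to be $3$-connected (or otherwise combinatorially rigid) makes the embedding unique up to reflection and the choice of outer face, by Whitney's theorem. Then I would exploit the structure forced by $y$-monotonicity: each face is drawn $y$-monotonically and so has a single bottommost and a single topmost vertex apart from possible horizontal edges on its extreme rows; consequently any vertex drawn strictly inside a face $f$ lies on a row between the extreme rows of $f$, and a strictly nested chain of such regions (equivalently, a sequence of vertices each forced to have neighbours both strictly above and strictly below it) consumes a number of rows that grows with the nesting depth. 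The graph $H$ should be engineered so that in \emph{every} admissible embedding some such row-consuming configuration is unavoidable and pushes the height past $6$, while in the non-$y$-monotone drawing of Figure~\ref{fig:not_monotone_nolabel} an edge is allowed to leave a row and re-enter it later, ``folding'' around the obstruction and thereby reusing rows that a $y$-monotone routing would have to spend.

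The main obstacle I anticipate is the exhaustiveness of the negative half: one must rule out \emph{all} planar $y$-monotone poly-line drawings on $6$ rows, not merely the natural-looking ones. This means a case analysis over the (few) choices of outer face and reflection of the embedding, over which vertices are assigned to which of the $6$ rows, and --- since poly-line edges may still bend within a row range --- over the possible $y$-monotone routings through the non-triangulated parts of $H$, showing in each case either a violation of planarity or the need for a seventh row. Keeping this analysis finite and checkable is precisely the reason to choose $H$ with as rigid and small a combinatorial structure as the argument allows.
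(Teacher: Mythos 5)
Your derivation of the corollary is exactly the paper's: observe that every planar straight-line drawing is a planar $y$-monotone poly-line drawing, and apply Theorem~\ref{thm:PL_monPL} to the graph of Figure~\ref{fig:not_monotone_nolabel}. The additional sketch of how one might prove Theorem~\ref{thm:PL_monPL} is not needed for the statement at hand (the paper proves that theorem separately, via nested $K_{2,5}$ gadgets in the appendix), and your main argument matches the paper's.
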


\begin{figure}[t]
\hspace*{\fill}
\input{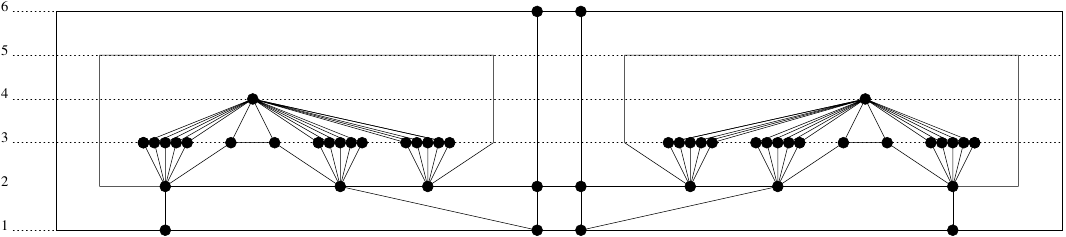_t}
\hspace*{\fill}
\caption{A graph that can be drawn on 6 rows, but not if edges must
be $y$-monotone.}
\label{fig:not_monotone_nolabel}
\end{figure}

\section{Applications}
\label{se:appl}

We give a few applications of Theorem~\ref{thm:main}.

\smallskip\noindent{\bf HH-drawings: }
In a previous paper \cite{BKM-WG98} we studied {\em HH-drawings},
where we are given a planar graph $G$ with a vertex partition $V=A\cup B$,
and we would like to draw $G$ such that all vertices in $A$ have positive
$y$-coordinates and all vertices in $B$ have negative $y$-coordinates.
See also Figure~\ref{fig:HHdrawing}.  We gave necessary and sufficient
conditions for the existence of HH-drawings that were $y$-monotone poly-line
drawings.  We also argued that straight-line HH-drawings required exponential
area for some graphs, but we were not able to actually construct
straight-line HH-drawings.  With Theorem~\ref{th:main}, this missing
link has now been added, because the $y$-monotone poly-line HH-drawing
can be converted to a straight-line drawing, and it is still an HH-drawing
since $y$-coordinates have not been changed.  In particular we hence have:

\begin{theorem}
Any planar bipartite graph has a planar straight-line HH-drawing.
\end{theorem}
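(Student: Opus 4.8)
The plan is to compose two facts: the existence of a $y$-monotone poly-line HH-drawing for every planar bipartite graph, and Theorem~\ref{th:main}. Throughout, let $G$ be a planar bipartite graph and let $V=A\cup B$ be its bipartition; this is the partition with respect to which we construct the HH-drawing, so every edge of $G$ joins $A$ to $B$.

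First I would exhibit a planar \emph{$y$-monotone poly-line} HH-drawing $\Gamma$ of $G$, i.e.\ a $y$-monotone poly-line drawing in which every vertex of $A$ has a strictly positive $y$-coordinate and every vertex of $B$ a strictly negative one. This is where \cite{BKM-WG98} enters: that paper gives a necessary and sufficient condition for a $y$-monotone poly-line HH-drawing to exist for a given partitioned planar graph, and the task is to verify that the bipartition of a bipartite planar graph always satisfies it. This should hold because every edge already has one endpoint in $A$ and one in $B$ and hence necessarily crosses the $x$-axis, so the obstructions that rule out other partitions (an $A$--$A$ or $B$--$B$ edge forced to ``dip'' across the axis, or a nesting conflict among same-side edges) do not arise. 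If this corollary is not already recorded in \cite{BKM-WG98}, I would derive it from their characterization. I would also normalize $\Gamma$ by a $y$-monotonicity-preserving rescaling so that its $y$-coordinates are integers with $y\ge 1$ on $A$ and $y\le -1$ on $B$, so that Theorem~\ref{th:main} applies cleanly.

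Next I would apply Theorem~\ref{th:main} to $\Gamma$, obtaining a planar \emph{straight-line} drawing $\Gamma'$ of $G$ having exactly the same $y$-coordinate for every vertex. Finally I would observe that whether a drawing is an HH-drawing depends only on the signs of the vertices' $y$-coordinates; since those are unchanged, $\Gamma'$ still places all of $A$ strictly above and all of $B$ strictly below the $x$-axis, so $\Gamma'$ is a planar straight-line HH-drawing, as required.

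The only genuinely non-routine step is the first one: confirming that every planar bipartite graph meets the \cite{BKM-WG98} existence condition for $y$-monotone poly-line HH-drawings (equivalently, directly building such a drawing, e.g.\ by orienting every edge from $B$ to $A$ and producing an upward-planar layout whose sources all lie below and sinks all lie above the axis). Once that is in hand, the passage to a straight-line drawing is immediate from Theorem~\ref{th:main} and the fact that it preserves $y$-coordinates.
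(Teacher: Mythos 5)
Your proposal is correct and follows essentially the same route as the paper: take the $y$-monotone poly-line HH-drawing of the bipartite graph guaranteed by \cite{BKM-WG98}, apply Theorem~\ref{th:main}, and observe that since $y$-coordinates are preserved the result is still an HH-drawing. The existence of the poly-line HH-drawing for every planar bipartite graph is indeed already established in \cite{BKM-WG98}, so the step you flag as the only non-routine one is exactly the external input the paper relies on.
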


\smallskip\noindent{\bf Drawing outer-planar graphs with small height: }
An {\em outer-planar graph} is a planar graph that can be drawn such
that all vertices are on the outer-face.
A {\em flat visibility
representation} is an assignment of disjoint horizontal segments to vertices
and disjoint horizontal or vertical segments to edges so that edges touch their 
endpoints and do not intersect any other vertices.
In \cite{Bie-WAOA12} we showed that
any 2-connected outer-planar graph $G$ has a 
flat visibility representation of height at most $4pw(G)$, 
where the {\em pathwidth} $pw(G)$ is a graph parameter that is a lower bound on
the height of any planar drawing \cite{FLW03}.

By a result of Babu et al.~\cite{BBC+13}, we can add edges to any
outer-planar graph $G$ to obtain a maximal outer-planar graph $G'$ with 
pathwidth in $O(pw(G))$. 
A simple exercise shows that flat visibility
representations can be converted into $y$-monotone poly-line drawings of the 
same height.  
Theorem~\ref{thm:main} hence implies:

\begin{theorem}
Every outer-planar graph $G$ has a planar straight-line drawing of
height $O(pw(G))\subseteq O(\log n)$.
\end{theorem}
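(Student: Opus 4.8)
The plan is to pass to a maximal outer-planar supergraph of small pathwidth, realize it first as a flat visibility representation and then as a $y$-monotone poly-line drawing of the same (small) height, straighten that out with Theorem~\ref{thm:main}, and finally delete the edges that were added.

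First I would dispose of degenerate cases: if $|V(G)|\le 2$ the statement is trivial, and if $G$ is disconnected I would argue one component at a time, using that the pathwidth of a disjoint union is the maximum of the pathwidths of its components and that the straight-line drawings of the components can be placed side by side without increasing the height. So assume $G$ is connected with at least three vertices. By the result of Babu et al.~\cite{BBC+13} there is a maximal outer-planar graph $G'$ with $E(G)\subseteq E(G')$ and $pw(G')\in O(pw(G))$. A maximal outer-planar graph on at least three vertices is $2$-connected, so the construction of \cite{Bie-WAOA12} gives a flat visibility representation of $G'$ of height at most $4\,pw(G')\in O(pw(G))$. As remarked above, this flat visibility representation converts into a $y$-monotone poly-line drawing $\Gamma$ of $G'$ of the same height (keep every vertex's $y$-coordinate, collapse each vertex-segment to a single point on it, reconnect the incident edges with at most one new bend each, and observe that every edge is then drawn $y$-monotonically).

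Next I would apply Theorem~\ref{thm:main} to $\Gamma$ to obtain a planar straight-line drawing $\Gamma'$ of $G'$ with the same $y$-coordinates, hence of height $O(pw(G))$. Deleting from $\Gamma'$ the edges in $E(G')\setminus E(G)$ leaves a planar straight-line drawing of $G$, and since no vertex moves the number of occupied rows cannot increase; this is the drawing claimed by the theorem. The final inclusion $O(pw(G))\subseteq O(\log n)$ holds because every outer-planar graph has pathwidth $O(\log n)$: it has treewidth at most $2$, and graphs of bounded treewidth have pathwidth $O(\log n)$.

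The main obstacle is the flat-visibility-to-poly-line conversion, where one must verify both that the resulting edges are $y$-monotone (so that Theorem~\ref{thm:main} applies) and that the height is unchanged; everything else is bookkeeping, modulo confirming that Babu et al.'s edge-addition keeps the graph outer-planar and increases its pathwidth by only a constant factor.
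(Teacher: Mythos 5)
Your proposal follows the paper's own route exactly: augment to a maximal outer-planar graph via Babu et al., invoke the $4\,pw$ flat visibility representation from \cite{Bie-WAOA12} for the resulting $2$-connected graph, convert to a $y$-monotone poly-line drawing of the same height, straighten with Theorem~\ref{thm:main}, and delete the added edges. The extra details you supply (degenerate cases, the visibility-to-poly-line conversion, and the $O(\log n)$ pathwidth bound) are correct but are treated as routine in the paper.
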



\medskip\noindent{\bf Integer programming formulations: }
In a recent paper, we developed integer program (IP) formulations
for many graph drawing problems where vertices and edges are
represented by axis-aligned boxes \cite{BBN+-GD13}.  By adding
some constraints, one can force that edges degenerate to line segments
and vertices to horizontal line segments.  In particular, it is easy to 
create an IP that expresses ``$G$ is drawn as a flat visibility
representation'', using $O(n^3)$ variables and constraints.
%
%
%

It is quite easy to show that every flat visibility representation
can be converted into a $y$-monotone poly-line drawing, and vice versa,
without changing $y$-coordinates.  By Theorem~\ref{th:main} we hence have:

\begin{theorem}
A graph $G$ has a planar straight-line drawing of height $h$
if and only if it has a planar flat visibility representation of height $h$.
\end{theorem}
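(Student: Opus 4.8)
(Plan)
The plan is to deduce the statement from Theorem~\ref{thm:main} and two conversions, each of which leaves the $y$-coordinate of every vertex, and hence the set of occupied rows, unchanged:
\begin{itemize}
\item[(A)] every planar flat visibility representation of height $h$ yields a planar $y$-monotone poly-line drawing of height $h$; and
\item[(B)] every planar straight-line drawing of height $h$ yields a planar flat visibility representation of height $h$.
\end{itemize}
With (A) and (B) the theorem is immediate both ways: a flat visibility representation of height $h$ gives, by (A), a $y$-monotone poly-line drawing of height $h$, hence, by Theorem~\ref{thm:main}, a straight-line drawing of height $h$; conversely a straight-line drawing of height $h$ gives, by (B), a flat visibility representation of height $h$. (A disconnected graph is handled component by component, placed side by side.)

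For (A) I would shrink every vertex bar to one of its points (say its left end), keep each horizontal edge, and reroute each vertical edge --- drawn in a column $c$ joining the bars of $u$ and $w$ --- as the polyline that runs horizontally along row $y(u)$ to column $c$, then vertically to row $y(w)$, then horizontally to $w$. This polyline is $y$-monotone, places no bend outside the rows already used, and (after a small perturbation of the attachment heights at each vertex, to separate edges that share an endpoint) yields a planar drawing of the same height, planarity being clear since each edge stays near its old curve together with the two bars it joined.

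For (B) I would work directly with the straight-line drawing $\Gamma'$ of $G$. On each row $r$, $\Gamma'$ induces a strict left-to-right order of the objects on $r$: the vertices of $G$ on $r$ and the points where non-horizontal edges cross $r$. I would then build a flat visibility representation of $G$ in which every vertex $v$ is a horizontal bar on row $y(v)$, every horizontal edge of $G$ is a horizontal segment, every non-horizontal edge $e$ is a vertical segment in some column $c_e$ lying inside the bars of both its endpoints, and the per-row left-to-right orders are exactly those of $\Gamma'$. Such a representation exists because the required $x$-coordinates need only satisfy a system of betweenness and interval-containment constraints with no directed cycle: a cyclic conflict would force two edges to cross or two bars to overlap, which cannot happen in the planar drawing $\Gamma'$ (two non-horizontal edges that appear in opposite orders on two common rows must cross; an edge that crosses row $y(v)$ lies on one fixed side of $v$ there; two bars share a row only if the corresponding vertices did, and then in a fixed order). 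A topological-sort-style assignment of $x$-coordinates, inserting as many empty columns as needed, realizes the constraints; the result is planar by the argument of Lemma~\ref{lem:planar}, and its bars lie on the rows of $\Gamma'$, so it has height $h$.

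I expect the work to lie in step (B): writing the $x$-coordinate constraints precisely (a bar wide enough to host all its incident edges; each non-horizontal edge confined to a single column that is on the correct side of every vertex and of every other edge sharing a row with it), verifying that they are acyclic, and confirming that the resulting axis-parallel drawing is crossing-free. It is essentially the routine fact that a planar drawing can be ``squared off'' into horizontal vertex-bars and vertical edges without moving any vertex in $y$ or changing the order within any row; the only mild subtlety is that the horizontal edges of $G$ must stay horizontal and nothing may leave the original rows.
\qed
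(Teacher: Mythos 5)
Your overall route is the same as the paper's: the paper also proves this theorem by asserting the two conversions (flat visibility representation to $y$-monotone poly-line drawing, and back) without changing $y$-coordinates, and then invoking Theorem~\ref{thm:main}; it gives no details for either conversion, calling them ``quite easy.'' Your direction (A) is fine as sketched (the fan of segments from each collapsed vertex point to the columns of its incident edges stays inside the slab of that vertex's old bar, so planarity is clear; just note that your ``perturbation of attachment heights'' moves bends off the grid rows, which is harmless for the height claim since vertices keep their rows).

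The one genuine weak point is the acyclicity claim in (B). Your system of $x$-coordinate constraints is a collection of strict ``left-of'' inequalities (one per consecutive pair of objects in each row) together with containment constraints (each edge's column inside both endpoint bars), and it is feasible exactly when the constraint digraph has no directed cycle. The three facts you cite only rule out \emph{pairwise} conflicts, i.e.\ cycles of length two: two objects that share a row have a well-defined relative order. They do not rule out a longer cycle $A_1 \prec A_2 \prec \dots \prec A_k \prec A_1$ in which each consecutive constraint is witnessed on a \emph{different} row and is propagated through containment constraints at vertices; showing that such a cycle forces a crossing in $\Gamma'$ requires a global planarity argument (e.g.\ that two $y$-monotone curves spanning a common horizontal strip and appearing in opposite orders on its two boundary rows must intersect, applied along the whole cycle), which you have not supplied. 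This is exactly where you say you ``expect the work to lie,'' and it is the only step that is not routine. A standard way to sidestep the acyclicity argument entirely is more mechanical: subdivide every non-horizontal edge of $\Gamma'$ at each row crossing, fatten vertices and subdivision points into bars, route each resulting edge between two adjacent rows (which is always possible planarly because the two row orders are consistent), and then eliminate the zig-zag bends by the usual cut-and-shift operation, which adds width but no rows; un-subdividing then yields the flat visibility representation.
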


It is very easy to encode the height in the IP formulations
of \cite{BBN+-GD13}. 
Therefore:

\begin{corollary}
There exists an integer program with $O(hn^2)$ variables and constraints
to test whether a graph $G$ has a planar straight-line drawing of height $h$.
\end{corollary}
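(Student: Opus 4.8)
The plan is to combine the equivalence just proved with the integer-program (IP) machinery of~\cite{BBN+-GD13}. By the theorem immediately above, $G$ admits a planar straight-line drawing of height $h$ if and only if it admits a planar flat visibility representation of height $h$; hence it suffices to write down an IP of size $O(hn^2)$ whose feasible solutions correspond exactly to the planar flat visibility representations of $G$ that use only the rows $1,\dots,h$. Since $G$ is planar we have $m\in O(n)$, so there are only $O(n)$ objects (vertices and edges) to be placed, and all coordinate variables can be taken integral, so the result is genuinely an integer program.

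In the vertical direction I would deviate from~\cite{BBN+-GD13}. Instead of maintaining a ``below'' partial order among the $O(n)$ objects --- whose transitivity closure is what forces the $\Theta(n^3)$ term in the general formulation --- introduce for every object $o$ and every row $r\in\{1,\dots,h\}$ a $0/1$ variable $\rho_{o,r}$ that is $1$ iff $o$ occupies row $r$; this is $O(hn)$ variables. Linear constraints ($O(hn)$ of them) force the support of $\rho_{v,\cdot}$ to be a single row for every vertex $v$ (flatness), force the support of $\rho_{e,\cdot}$ to be a contiguous interval of rows (a single row if $e$ is drawn horizontally), and force that interval to have the rows of the two endpoints of $e$ as its (possibly equal) ends. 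With this the ``below'' relation between any two objects is a function of the $\rho$-variables and needs no further encoding.

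In the horizontal direction I would keep the relative-position idea: for every ordered pair of objects $(o,o')$ a $0/1$ variable $\ell_{o,o'}$ meaning ``the $x$-interval of $o$ lies strictly left of that of $o'$'' ($O(n^2)$ variables), linked by big-$M$ inequalities to per-object $x$-left and $x$-right coordinate variables ($O(n)$ variables). Planarity is then the single family of constraints ``if $o$ and $o'$ both occupy row $r$ then they are horizontally separated'', i.e.\ $\rho_{o,r}+\rho_{o',r}\le 1+\ell_{o,o'}+\ell_{o',o}$, posted for every pair and every row --- $O(hn^2)$ constraints, with transitivity of the left-of order coming for free from the real-valued (bounded integer) $x$-coordinates. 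Incidence of an edge with its two endpoints, and the requirement that a (possibly vertical) edge not be pierced by any other object at an intermediate row, are expressed from $\rho$, $\ell$ and the $x$-coordinates with a constant number of inequalities per edge--object--row triple, again $O(hn^2)$ in total. Counting everything gives an IP with $O(hn^2)$ variables and constraints, and by construction its feasible region is exactly the set of planar flat visibility representations of $G$ on rows $1,\dots,h$; with the equivalence theorem this proves the corollary.

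The main obstacle I expect is the faithful restatement of the ``planar flat visibility representation'' predicate of~\cite{BBN+-GD13} in this row-indexed form, not the size accounting: one must treat horizontally and vertically routed edges uniformly, make sure that an edge merely \emph{crossing} a row and an edge \emph{ending} on a row are both captured by the contiguous-interval condition on $\rho_{e,\cdot}$, and --- most importantly --- make sure non-overlap is tested on \emph{every} common row, so that a vertical edge cannot be pierced by a vertex lying between its endpoints. Once this predicate is spelled out, both the $O(hn^2)$ bound and correctness follow directly from~\cite{BBN+-GD13} and the preceding theorem.
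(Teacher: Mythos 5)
Your proposal is correct and follows essentially the same route as the paper: reduce via the preceding theorem to deciding whether $G$ has a planar flat visibility representation of height $h$, and encode that predicate as an integer program whose size is $O(n^2)$ pairs of objects times $h$ rows. The paper simply cites the formulation of~\cite{BBN+-GD13} and asserts that bounding the height yields the $O(hn^2)$ count, whereas you spell out a concrete row-indexed encoding that achieves it; this is a difference in level of detail, not in approach.
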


While an algorithm was already known to test 
whether $G$ has a planar drawing of height at most $h$ \cite{DFK+08}, 
its rather large run-time of $O(2^{32h^3}\mbox{poly}(n))$
means that solving the above integer program might well be faster
in practice.

\section{Conclusion and open problems}
\label{se:open}

In this paper, we studied how to transform planar poly-line drawings
into straight-line drawings.  In particular we showed 
that $y$-monotone poly-line drawings are no
more powerful (with respect to the height) than straight-line drawings,
because any planar $y$-monotone poly-line drawing can be transformed into
a planar straight-line drawing with the same height.
If we drop ``$y$-monotone'' then we can argue
that poly-line drawings are sometimes truly better (with respect to
height) than straight-line drawings.

We also demonstrated some applications
of our height-preserving transformations, 
especially for obtaining drawings of small height.

Our main open problem concerns the width.
If we want to 
keep the height exactly the same, then exponential width is 
required for the example in Figure~\ref{fig:bad_bad}.  
But is it possible to make the width polynomial
while keeping the height asymptotically the same?
Generally, what is the width of the construction in Theorem~\ref{thm:PL_SL},
and how does it depend on the height?

Also, the graph in Figure~\ref{fig:not_monotone} can easily
be drawn with $y$-monotone curves if we use 7 rows.  
Can every poly-line drawing (not necessarily $y$-monotone)
be converted into a straight-line
drawing with asymptotically the same height?

Finally, are there other invariants that one can maintain
while ``straightening out'' poly-line drawings, at least
under some restrictions such as $y$-monotonicity?



\bibliographystyle{plain}
\bibliography{../../bib/full,../../bib/papers,../../bib/gd}

\newpage

\begin{appendix}

\section{The algorithm in \cite{Bie-WAOA12}}
\label{se:VR_SL}

Recall that a {\em flat visibility
representation} is an assignment of disjoint horizontal segments to vertices
and disjoint horizontal or vertical segments to edges so that edges touch their 
endpoints and do not intersect any other vertices.  (In the figures
below, we show vertices thickened into boxes for ease of readability.)
In \cite{Bie-WAOA12}, we claimed the following result. 

\begin{theorem}
\label{thm:VR2SL}
\label{thm:VR_SL}
\label{th:VR2SL}
Any planar flat visibility representation $\Gamma$ can be transformed into a 
planar straight-line drawing $\Gamma'$ with the same $y$-coordinates.
\end{theorem}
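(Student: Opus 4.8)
The plan is to convert a flat visibility representation $\Gamma$ into a $y$-monotone poly-line drawing with the same $y$-coordinates, and then invoke Theorem~\ref{thm:main} (our main theorem) to straighten it out without changing any $y$-coordinate. This is the ``simple exercise'' alluded to in Section~\ref{se:appl}, so the work is in making the conversion precise and in checking that nothing gets lost. First I would handle the two kinds of edge segments separately. Every vertex $v$ is a horizontal segment, say on row $y(v)$; replace it by a single point, the midpoint of the segment (or any point of it). An edge drawn as a single vertical segment from $v$ up to $w$ is already a straight $y$-monotone segment once we contract the endpoint boxes to points, so it needs no real change: reattach it to the chosen points of $v$ and $w$, possibly adding one bend just above $v$'s row and one just below $w$'s row so that it still leaves/enters at the prescribed $x$-coordinate before sliding over to the point — this keeps planarity since it stays inside the column it used to occupy. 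An edge drawn as a single horizontal segment lies entirely within one row; this is the only genuinely new case. Here I would route the edge as a $y$-monotone (indeed horizontal) poly-line along that row, through the same $x$-interval it occupied in $\Gamma$; since in a flat visibility representation a horizontal edge segment does not intersect any other vertex or edge, this horizontal poly-line is crossing-free, and it is trivially $y$-monotone because horizontal segments are allowed.

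The key steps, in order, would be: (i) fix for each vertex a representative grid point on its segment; (ii) re-route vertical edges to these points while staying within their original columns, using at most two bends each, preserving the left-to-right order in every row; (iii) re-route horizontal edges as horizontal poly-lines through their original $x$-intervals; (iv) observe that the resulting drawing $\Gamma''$ is a planar poly-line drawing — planarity follows because we never left the region swept by the original segment of each edge, and the left-to-right order of objects in each row is preserved, so by Lemma~\ref{lem:planar} no new crossing is introduced; (v) note that $\Gamma''$ is $y$-monotone by construction and has exactly the same vertex $y$-coordinates as $\Gamma$; (vi) apply Theorem~\ref{thm:main} to $\Gamma''$ to obtain a planar straight-line drawing $\Gamma'$ with the same $y$-coordinates (and, as a bonus, the same left-to-right orders in each row). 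Since $y$-coordinates of vertices are unchanged throughout, $\Gamma'$ has the same $y$-coordinates as $\Gamma$, which is exactly the claim.

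The main obstacle I expect is step~(iv), i.e.\ verifying that the re-routing genuinely introduces no crossings and preserves the left-to-right order per row — in particular, two distinct edges might have shared an $x$-coordinate only at points not lying on any common row in $\Gamma$, and the little ``doglegs'' added near the endpoints of vertical edges must be placed carefully (and the vertex-point must be chosen consistently) so that they do not collide with neighbouring edges attaching to the same vertex box. The clean way to dispose of this is the pseudo-vertex trick used elsewhere in the paper: subdivide every edge at each row it crosses and at each bend, so that every edge piece spans at most one row; then the ``same left-to-right order in each row'' condition is exactly what we arrange by construction, and Lemma~\ref{lem:planar} applies verbatim. Everything else is routine, and the heavy lifting is entirely outsourced to Theorem~\ref{thm:main}. \qed
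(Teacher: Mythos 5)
Your proposal is correct and follows exactly the route the paper itself takes: the paper justifies Theorem~\ref{thm:VR2SL} by the one-line remark that a flat visibility representation is easily converted into a $y$-monotone poly-line drawing with unchanged $y$-coordinates, after which Theorem~\ref{thm:main} does all the work (the rest of the paper's appendix is devoted only to exhibiting the error in the older direct algorithm of \cite{Bie-WAOA12}). You merely spell out the ``simple exercise'' of the conversion in more detail than the paper does, and your handling of it is sound.
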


While the result is correct (one can easily transform
flat visibility representation into a $y$-monotone poly-line drawing
and then apply Theorem~\ref{thm:PL_SL}), the
simple algorithm that we gave for it in \cite{Bie-WAOA12} unfortunately 
was incorrect.  This section reviews the algorithm and gives the
counter-example.

\subsection{The algorithm} 

Assume that $\Gamma$ is a planar flat visibility representation.
For any vertex $v$, use $x_l(v), x_r(v)$ and $y(v)$ to denote leftmost and
rightmost $x$-coordinate and (unique) $y$-coordinate of the line segment that 
represents $v$ in $\Gamma$.   
Use $X(v)$ and $Y(v)$ to denote
the (to-be-determined) coordinates of $v$ in the straight-line drawing 
$\Gamma'$ that we construct.
For any vertex set $Y(v)=y(v)$, hence $y$-coordinates are the same.

Let $v_1,\dots,v_n$ be the vertices sorted by $x_l(\cdot)$, breaking ties
arbitrarily.  For each vertex $v_i$, let the {\em predecessors}
of $v_i$ be the neighbours of $v_i$ that come earlier in the order
$v_1,\dots,v_n$. 
The algorithm determines $X(\cdot)$ for each vertex by 
processing vertices in  order $v_1,\dots,v_n$
and expanding the drawing $\Gamma'_{i-1}$ created for 
$v_1,\dots,v_{i-1}$ into a drawing $\Gamma'_i$ of $v_1,\dots,v_i$.

Suppose $X(v_g)$ has been computed for all $g<i$ already.  To find
$X(v_i)$, determine lower bounds for it by considering all predecessors
of $v_i$ and taking the maximum over all of them.  
One lower bound for $X(v_i)$ is that it needs to be to the right of
anything in row $y(v_i)$.  Thus, if $\Gamma'_{i-1}$
contains a vertex
or part of an edge at point $(X,y(v_i))$, then $X(v_i)\geq 
\lfloor X \rfloor + 1$ is required.

Next consider any predecessor $v_g$ of $v_i$ with $y(v_g)\neq y(v_i)$.  
Since $v_g$ and $v_i$ are not in the same row, they must see each other
vertically in $\Gamma$, which means that $x_r(v_g)\geq x_l(v_i)$.
See also Figure~\ref{fig:transform}.
So if in $\Gamma$
any vertex $v_k$ exists to the right of $v_g$ in row $y(v_g)$, then
$x_\ell(v_k)> x_r(v_g) \geq x_\ell(v_i)$, which implies that 
$v_k$ has not yet been added to $\Gamma'_{i-1}$.  So vertex
$v_g$ is the rightmost
vertex in its row in $\Gamma'_{i-1}$ and can see towards infinity on the
right.  But then $v_g$ can also see the point $(+\infty,y(v_i))$, or
in other words, there exists some $X_g$ such that $v_g$ can see all
points $(X,y(v_i))$ for $X\geq X_g$.  
Impose the lower bound $X(v_i) \geq \lfloor X_g \rfloor + 1$ on the
$x$-coordinate of $v_i$.

\begin{figure}[t]
\hspace*{\fill}
\input{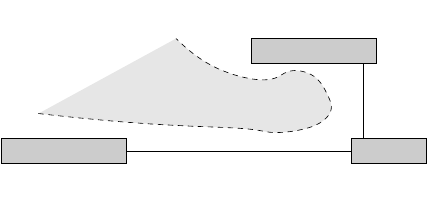_t}
\hspace*{\fill}
\input{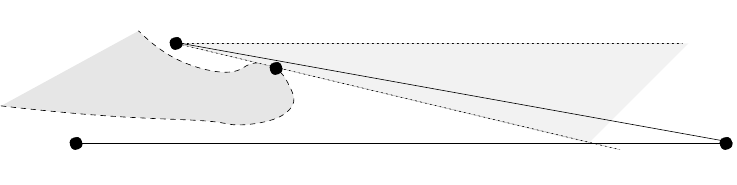_t}
\hspace*{\fill}
\caption{Transforming a flat visibility drawing into a straight-line
drawing with the same $y$-coordinates.}
\label{fig:transform}
\end{figure}

Now let $X(v_i)$ be the smallest value that satisfies the above lower
bounds (from the row $y(v_i)$ and from all predecessors of $v_i$ in
different rows.)  Set $X(v_i)=1$ if there were no such lower bounds.

\subsection{A bad example}

Unfortunately, the above algorithm sometimes does not give planar drawings.
The above algorithm preserves (as one can easily show by induction) the
left-to-right-orders among vertices, but it does not necessarily
preserve left-to-right-orders when we also include the points 
where edges cross rows.

Figure~\ref{fig:algo_wrong} shows a specific example where the
algorithm goes wrong.  The vertices are being placed in order
$\dots,u_1,\dots,x,u_2,\dots,y_1,\dots$.
When placing $u_2$, we also insert the edge
$(u_1,u_2)$, because $u_1$ is a predecessor of $u_2$.  But this edge
is ``much farther right'' in the sense that in the second row from
top, this edge should come {\em after} $y_1$, while the algorithm
draws it before placing $y_1$.  Consequently, the vertices $\{y_1,y_2,w\}$,
which should be drawn inside triangle $\{x,u_1,u_2\}$, are instead
drawn to the right of it, resulting in crossings.

\begin{figure}[t]
\hspace*{\fill}
\input{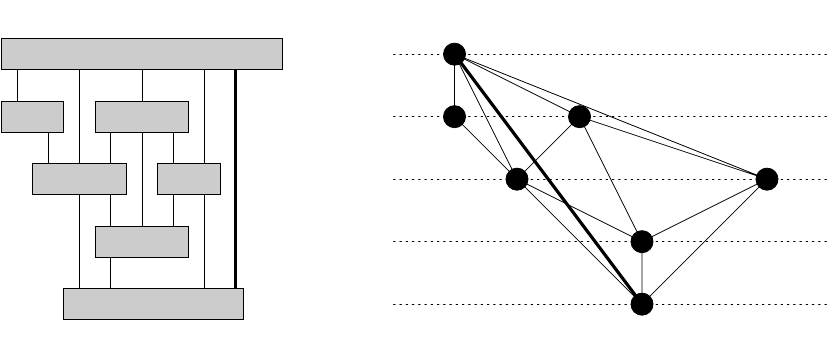_t}
\hspace*{\fill}
\caption{An example where the algorithm from \cite{Bie-WAOA12}
creates a crossing.  Edge $(u_1,u_2)$ (which is drawn ``too early'')
is bold.}
\label{fig:algo_wrong}
\end{figure}

In fact, on this graph the algorithm cannot preserve left-to-right-orders
among edges, even if we used some different processing order 
$v_1,\dots,v_n$. 
For to preserve left-to-right-orders, we would have to
place $(x,u_1)$ {\em before} placing $y_1$,
and so in particular both $x$ and $u_1$ must come before $y_1$ in the
vertex order.  Likewise $x$ and $u_2$ must come before $y_2$ in the vertex 
order.
On the other hand, $w$ must come {\em after} both $y_1$ and $y_2$ due
to edge $(y_1,y_2)$.  Combining the above, we see that both $u_1$ and $u_2$
must come before $w$.  But then edge $(u_1,u_2)$ is drawn before placing
vertex $w$, contradicting the left-to-right-orders  of edges in $w$'s row.

\section{Triangulating a $y$-monotone poly-line drawing}

We now give the full details of how to triangulate a $y$-monotone
poly-line drawing.  We assume that vertices, edges and rows have already
been added so that the outer-face is a triangle. 
Convert---by inserting bends as needed---the given 
drawing into a {\em short}
$y$-monotone poly-line drawing, where any edge-segment is horizontal
or connects two adjacent rows.  
%
Call a poly-line drawing {\em strictly
$y$-monotone} if any edge is either horizontal or drawn with a 
strictly $y$-monotone path.  Any short $y$-monotone poly-line drawing
can be made strict:
If some bend $b$ is incident to a horizontal segment $s_h$
and a non-horizontal segment $s_v$, then connect directly
from the other end $x$ of $s_h$ to the other end $z$ of $s_v$.  Since
$x$ is one row above or below $z$, 
this cannot introduce crossings. 
See Figure~\ref{fig:shortcut}.  

\begin{figure}[t]
\hspace*{\fill}
\input{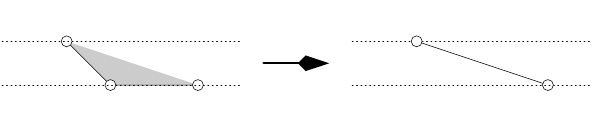_t}
\hspace*{\fill}
\input{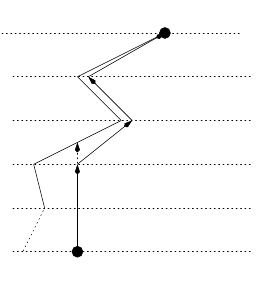_t}
\hspace*{\fill}
\caption{Shortcutting a bend incident to a horizontal edge segment,
and inserting a new path to a vertex higher up.}
\label{fig:shortcut}
\label{fig:new_path}
\end{figure}

Now we have a short strictly $y$-monotone poly-line drawing.
If some inner  vertex $v$ has no neighbour $w$
with $y(w)>y(v)$,  then go upward from $v$ until
we hit some edge $e$, say at $y$-coordinate $Y$.
Let $w$ be the end of $e$ with larger $y$-coordinate (breaking
ties arbitrarily) and add edge $(v,w)$ to the graph.    Add a
strictly $y$-monotone path for $(v,w)$ to $\Gamma$ as follows.
Define $r:=\lceil Y\rceil -1$ to be the row just below where we
hit $e$.
Go upward from $v$ until row $r$, then in rows $r+1,\dots,y(w)-1$
add a bend next to the bend of $e$ (on the side from which we hit $e$),
and then connect to $w$.  See Figure~\ref{fig:new_path}.  No
edge can cross this path that wouldn't have crossed $e$, by
choice of $e$ and the placement of bends.

In this fashion add edges until any inner  vertex 
has neighbours strictly above and below.    
In the resulting drawing 
any inner  face $f$ is a $y$-monotone 
polygon and has no horizontal edges except perhaps a single
horizontal edge each at the
minimum or maximum $y$-coordinate. 
If $f$ contains four or more vertices, then it must contain
two vertices $v,w$ such that $(v,w)$ is not an edge; otherwise we had an
outer-planar drawing of $K_4$.    We now show how to add $(v,w)$ to $\Gamma$.

If $y(v)\neq y(w)$, say $y(v)<y(w)$, then 
find points in rows $y(v)+1,y(v)+2,\dots,y(w)-1$
that are strictly inside $f$.
Since $f$ is $y$-monotone, any two consecutive such points 
can be connected without
crossing, so draw $(v,w)$ following these points.
If $y(v)=y(w)$ then the horizontal
segment from $v$ to $w$ belongs to the closure of the $y$-monotone face $f$. 
In fact it lies strictly inside
$f$ since the only horizontal edge of $f$ are single edges at the
minimum and maximum $y$-coordinate, but $(v,w)$ was not an edge before.
So we can drawn $(v,w)$ horizontally. 

\section{$y$-monotonicity}

\begin{figure}[t]
\hspace*{\fill}
\input{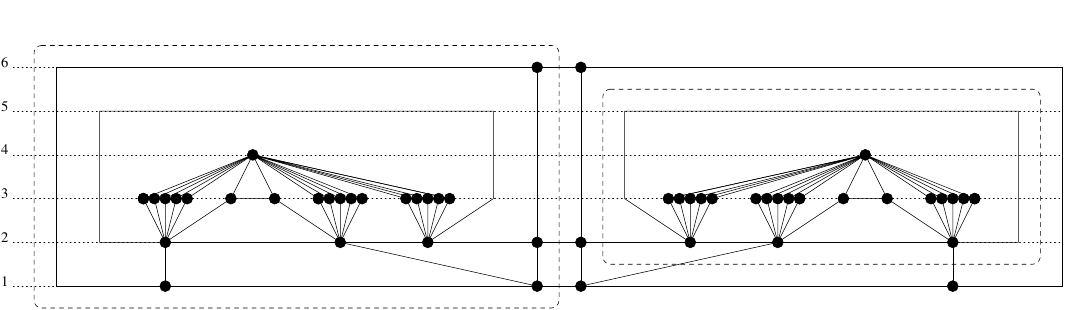_t}
\hspace*{\fill}
\caption{A graph that can be drawn on 6 rows, but not if edges must
be $y$-monotone.}
\label{fig:not_monotone}
\end{figure}

\setcounter{theorem}{3}
\addtocounter{theorem}{-1}

This section gives the full proof that $y$-monotonicity is restrictive,
i.e., that sometimes the height can be reduced if edges are not
drawn $y$-monotone.

\begin{theorem}
There exists a graph with a planar poly-line drawing on 6 rows that
has no planar $y$-monotone poly-line drawing on 6 rows.
\end{theorem}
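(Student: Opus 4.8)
The plan is to work with the concrete graph $G$ of Figure~\ref{fig:not_monotone}, which is assembled from two gadgets $G_1$ and $G_0$ sharing the ``frame'' vertices $a,b,c,r,s,t$ (and the edges among them), with $G_1$ drawn inside a separating cycle $C$ through part of the frame and $G_0$ drawn outside it. The first, routine, step is to verify that the drawing shown in the figure is planar, poly-line, and uses exactly six rows, and to pinpoint the edge(s) that are \emph{not} $y$-monotone: these are edges incident to the frame that, in the given drawing, ``go around'' an obstruction --- they leave the vertical range spanned by $C$ and return to it (equivalently, they are forced to detour off a row on which another required vertex already sits strictly between their endpoints). This settles the ``drawable on $6$ rows'' half of the statement and makes explicit what $y$-monotonicity would cost.

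For the lower bound, suppose for contradiction that $\Gamma$ is a planar $y$-monotone poly-line drawing of $G$ on $6$ rows. Since the parts of $G$ around the frame are $3$-connected, the planar embedding of $G$ is determined up to reflection and the choice of outer face, so in $\Gamma$ the cycle $C$ separates the plane with all of $G_1$ strictly inside and the ``far'' part of $G_0$ strictly outside. Two geometric facts then collide. First, a cycle whose edges are all $y$-monotone lies in the horizontal band between its lowest and highest vertices, hence so does its interior; therefore every row met by $G_1$ lies in the closed interval $[y_{\min}(C),y_{\max}(C)]$. Since the gadget built on $r,s,t$ forces, in \emph{any} planar drawing, at least $h_1$ distinct rows (a pathwidth-type lower bound in the spirit of \cite{FLW03}), the cycle $C$ must span at least $h_1$ rows. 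Second, $G_0$ is constructed so that, in the fixed embedding, it has a vertex strictly above $C$ and a vertex strictly below $C$, each needing a row outside $[y_{\min}(C),y_{\max}(C)]$. Adding the span of $C$ and these two extra rows gives at least $h_1+2$ rows; with $h_1$ chosen so that $h_1+2>6$ in the construction, this contradicts the assumption, proving the theorem. A handful of symmetric easy cases must be handled in the same breath --- the two reflections, which of the two gadgets plays the ``inside'' role, and the possibility that $C$ uses a horizontal edge so that two of its vertices share a row --- and the gadgets of Figure~\ref{fig:not_monotone} are made symmetric precisely so that this accounting is uniform.

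I expect the main obstacle to be the height lower bound for $G_1$ drawn inside $C$: one has to show both that the $r,s,t$ gadget cannot be ``compressed'' by any clever $y$-monotone routing and that the rows it requires are genuinely confined to the vertical span of $C$ rather than being ``absorbed'' by the rows occupied by $C$ itself, and likewise that $G_0$'s external demands survive the interaction with $\Gamma$. Both amount to a careful propagation of $y$-coordinate constraints through the frame --- tracking which frame vertices are forced above or below which, and using that a horizontal edge lets two vertices share at most one row --- and it is exactly this rigidity, which non-$y$-monotone routings do not suffer, that separates the two drawing models here. Everything else (connectivity, uniqueness of the embedding, and the row count) is bookkeeping.
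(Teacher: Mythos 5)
Your reduction has a self-defeating quantitative skeleton. You want the inner gadget to require $h_1$ rows ``in any planar drawing'' (a pathwidth-type bound) and then to add two rows for the material outside the separating cycle $C$, with $h_1+2>6$. But both ingredients are independent of $y$-monotonicity: if a subgraph needs $h_1\geq 5$ rows in \emph{every} planar drawing, and it is surrounded by a cycle, then in every poly-line drawing the curve of that cycle must pass strictly above and strictly below the subgraph, and the extreme $y$-coordinates of a poly-line curve are attained at vertices or bends, which occupy grid rows. So your argument, if it closed, would show that the graph needs at least $7$ rows in every planar poly-line drawing --- contradicting the first half of the theorem, which you also must establish. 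The step that exploits $y$-monotonicity therefore has to live \emph{inside} the region bounded by $C$: the inner gadget must be one that fits on few rows with non-monotone edges but not with monotone ones, and that is precisely the piece your sketch defers as ``the main obstacle'' without supplying a mechanism.

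The paper's proof supplies that mechanism, and it is not a pathwidth bound. The graph consists of two copies of $G_1$, each a $4$-cycle surrounding a gadget $G_0$; at least one copy has its $4$-cycle as the outer face of its induced drawing, which (using that a $y$-monotone cycle attains its extreme $y$-coordinates at vertices) confines $G_0$ to rows $2$ through $5$. Here $G_0$ is a triangular prism on $\{a,b,c\}$ and $\{r,s,t\}$ with copies of $K_{2,5}$ attached at $\{a,r\}$, $\{b,r\}$ and $\{c,r\}$. Nesting forces $\{r,s,t\}$ onto rows $3$ and $4$; $y$-monotonicity forces the extremes of triangle $\{a,b,c\}$ to be achieved at vertices, so (say) $a$ is on row $2$ and $c$ on row $5$; and the key lemma --- any planar drawing of $K_{2,5}$ on three rows has its $2$-side vertices on the top and bottom rows --- then rules out both $y(r)=3$ and $y(r)=4$. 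Note that $G_0$ \emph{does} fit on four rows with non-monotone edges, which is why no drawing-model-independent lower bound can do the job. To salvage your plan you would need to replace the asserted ``pathwidth-type'' bound by a monotonicity-specific obstruction of this kind; as written, the proposal is missing its central lemma.
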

\begin{proof}
The graph $G$ and the poly-line drawing on 6 rows is shown in 
Figure~\ref{fig:not_monotone}.
Graph $G$ consists of two copies of $G_1$, which in turn consists of a 
4-cycle that surrounds a graph $G_0$.
Graph $G_0$ is constructed from a triangular prism with triangles 
$\{a,b,c\}$ and $\{r,s,t\}$. The edge $(c,r)$
of the triangular prism has been replaced by $K_{2,5}$, and two additional 
copies of $K_{2,5}$ have been inserted at $\{b,r\}$ and $\{a,r\}$.  
Observe that $G$ has only one planar embedding up to symmetry, since its only
cutting pairs are the 2-sides of each $K_{2,5}$.  
We need an observation.

\smallskip\noindent{\bf Claim: }
{\it Any planar drawing of $K_{2,5}$ on three rows contains the
2-side vertices on the top and bottom row.}
For the outer-face of $K_{2,5}$ consists of a 4-cycle, leaving three
vertices (all from the 5-side) that are surrounded by a 4-cycle and
hence on the middle row.
By planarity this forces the vertices of the 2-side onto the bottom row 
and the top row.

\smallskip
Now assume that $\Gamma$ is a $y$-monotone
poly-line drawing of $G$ on 6 rows.  
At least one copy of $G_1$ has the 4-cycle as outer-face in
the induced drawing. 
Because this 4-cycle enclosing $G_0$, the induced drawing
of $G_0$ must exist entirely on rows 2,3,4,5.   Because $G_0$ consists of
a triangle $\{a,b,c\}$ enclosing another triangle, the inner triangle 
$\{r,s,t\}$ is on rows 3 and 4 only.

Triangle $\{a,b,c\}$ encloses a triangle on rows 3 and 4, and hence must
contain points on rows 2 and 5.  By $y$-monotonicity, the maximum and
minimum $y$-coordinate of the triangle must be achieved at vertices,
so one of the vertices, say $a$, has $y$-coordinate 2, while another one,
say $c$, has $y$-coordinate 5.

Assume vertex $r$ had $y$-coordinate 3.  Then the $K_{2,5}$ with 2-side
$\{a,r\}$ is entirely drawn on rows 2,3,4 (because all vertices except
$a$ are inside triangle $\{a,b,c\}$), but $r$ and $a$ are on adjacent
rows.  This contradicts the claim.  If $r$ has $y$-coordinate 4 then
similarly we find a contradiction at the $K_{2,5}$ with 2-side $\{c,r\}$.
So no $y$-monotone poly-line drawing of $G$ on 6 rows can exist.
\qed
\end{proof}

\end{appendix}
\end{document}